\documentclass[AMA,STIX1COL]{WileyNJD-v2}

\usepackage{subfigure}
\articletype{Article Type}%
\algdef{SE}[DOWHILE]{Do}{doWhile}{\algorithmicdo}[1]{\algorithmicwhile\ #1}%
\allowdisplaybreaks

\received{26 April 2016}
\revised{6 June 2016}
\accepted{6 June 2016}

\raggedbottom

\begin{document}

\title{A Physics-Informed Learning Framework to Solve the Infinite-Horizon Optimal Control Problem\protect\thanks{This work was supported in part, by NSF under grant Nos. CAREER CPS-$1851588$,   CPS-$2227185$, S\&AS-$1849198$, and SLES-$2415479$, by ARO under grant No. W$911$NF-$24-1-0174$, and by the Onassis Foundation-Scholarship ID: F ZQ $064-1/2020-2021$.}}

\author[1]{Filippos Fotiadis*}

\author[2]{Kyriakos G. Vamvoudakis}

\authormark{F. Fotiadis, K. G. Vamvoudakis}

\address[1]{\orgdiv{The Oden Institute for Computational Engineering \& Sciences}, \orgname{University of Texas at Austin}, 
\orgaddress{\state{Texas}, \country{USA}}}

\address[2]{\orgdiv{Daniel Guggenheim School of Aerospace Engineering}, \orgname{Georgia Tech}, \orgaddress{\state{Georgia}, \country{USA}}}

\corres{*Filippos Fotiadis \email{ffotiadis@utexas.edu}}

\presentaddress{
201 E 24th St, Austin, TX 78712, USA }

\abstract[Summary]{
	We propose a physics-informed neural networks (PINNs) framework to solve the infinite-horizon optimal control problem of nonlinear systems. In particular, since PINNs are generally able to solve a class of partial differential equations (PDEs), they can be employed to learn the value function of the infinite-horizon optimal control problem via solving the associated steady-state Hamilton-Jacobi-Bellman (HJB) equation.
    However, an issue here is that the steady-state HJB equation generally yields multiple solutions; hence if PINNs are directly employed to it, they may end up approximating a solution that is different from the optimal value function of the problem. We tackle this by instead applying PINNs to a finite-horizon variant of the steady-state HJB that has a unique solution, and which uniformly approximates the optimal value function as the horizon increases. An algorithm to verify if the chosen horizon is large enough is also given, as well as a method to extend it -- with reduced computations and robustness to approximation errors -- in case it is not.  Unlike many existing methods, the proposed technique works well with non-polynomial basis functions, does not require prior knowledge of a stabilizing controller, and does not perform iterative policy evaluations.  Simulations are performed, which verify and clarify theoretical findings.
}

\keywords{Neural Networks, Learning, Optimal Control, Nonlinear Systems, Robustness to Approximation Errors}

\jnlcitation{\cname{%
\author{F. Fotiadis}, and
\author{K. G. Vamvoudakis}} (\cyear{2025}), 
\ctitle{A Physics-Informed Learning Framework to Solve the Infinite-Horizon Optimal Control Problem}}

\maketitle

\section{Introduction}

The infinite-horizon optimal control problem \cite{lewis2012optimal}, also known as the optimal stabilization problem \cite{haddad2011nonlinear}, has been a long-standing issue in control theory. Its main purpose is to find the control policy that stabilizes a dynamical system to the origin, while achieving a desired balance between the rate of convergence and the control effort expended in the closed loop. It yields a closed-form solution when the system dynamics are linear and the desired balance expressed by a quadratic function \cite{lewis2012optimal}, however, no such closed-form solution exists when the system dynamics are nonlinear or if the cost function is non-quadratic. 

At the heart of nonlinear/non-quadratic optimal stabilization is the so-called steady-state HJB equation. The importance of this equation lies in that it is solved by the optimal value function of the problem, and this function is crucial in obtaining the optimal control policy for the system. Nevertheless, the steady-state HJB is a nonlinear PDE that is difficult -- and almost always impossible -- to solve analytically. For this reason, a lot of the literature focuses on trying to solve it approximately, typically using actor-critic methods \cite{kiumarsi2017optimal, abu2005nearly, vamvoudakis2010online, modares2013adaptive, kamalapurkar2016efficient, mazouchi2021data}, many of which have also been extended in the context of multi-agent systems and data-driven control \cite{jiang2014robust, ruizhuo, zuo1, zuo2, modares}. 

One of the main methods for approximately solving the steady-state HJB equation for the optimal value function is what is known as successive approximations, or policy iteration (PI) \cite{beard1998approximate, kiumarsi2017optimal}. This procedure begins with a stabilizing control policy for the system, then proceeds to iteratively evaluate and improve it till convergence. However, the requirement of knowing a policy that asymptotically stabilizes the system to the origin can be restrictive, especially for a nonlinear system. Moreover, the unavoidable need to perform an iterative policy evaluation procedure can lead to an increased computational burden. Finally, polynomial basis functions are often required in the PI procedure \cite{beard1998approximate}, which can lead to severe numerical issues when higher-order monomials are used in the basis.

A different, emerging approach for solving a general class of PDEs, is that of PINNs \cite{raissi2019physics}. The main idea is that the function to be solved for in the underlying PDE can be uniformly approximated by a neural network (NN), the weights of which are trained to force the flow and boundary equations of the PDE to hold as closely as possible over some data. Interestingly, this idea has already been used in the literature to approximate optimal value functions through their associated steady-state HJB equations \cite{furfaro2022physics}, however, a critical issue is that the steady-state HJB equation generally yields more than one solution. This can be problematic in practice, as it can cause the PINN to converge to a solution of the HJB that is unrelated to the optimal value function, or to stall very easily during training. Interestingly, the reference    \cite{raissi2019physics} also points out that PINNs generally work well only when applied to PDEs with unique solutions.

\textit{Contributions:} Motivated by this discussion,  we propose a PINNs framework for solving the infinite-horizon optimal control problem, which does not suffer from the issue of possibly converging to unwanted solutions of the steady-state HJB equation. To avoid these solutions, instead of directly applying PINNs to the steady-state HJB, we apply them to its finite-horizon variant that has a unique solution, and rigorously prove that this solution uniformly approximates the optimal value function when the horizon length is large enough. We also provide: a) a method to evaluate whether the horizon length is indeed large enough; b) a method to increase it (in case it is not) with reduced computations; and c) a proof that 
approximation errors do not necessarily accumulate when if the horizon length is increased multiple times. Unlike PI, the proposed technique does not require knowledge of an initial stabilizing control policy, or computationally expensive iterative policy evaluations. 

A preliminary version of this paper appeared in \cite{fotiadis2023physics}. However, \cite{fotiadis2023physics} did not include proofs of the main results, did not study robustness to approximation errors, and was restricted to finite-horizon approximations of the steady-state HJB with terminal costs equal to zero.

\textit{Structure:} The rest of the paper is structured as follows. Section \ref{sec:prob} formulates the problem of infinite-horizon optimal control with PINNs, and a solution is proposed in Section \ref{sec:PINNs} by applying them to the finite-horizon HJB. In Section \ref{sec:unif}, it is proved that the solution of the finite-horizon HJB uniformly approximates the infinite horizon value function when the horizon is large enough. In Section \ref{sec:horizon}, a method to evaluate and extend the horizon length is provided, and the effect of approximation errors is studied. Finally, simulations are conducted in Section \ref{sec:sim}, and the paper is concluded in Section \ref{sec:conc}.

\textit{Notation:} The set $\mathbb{R}$ denotes the set of real numbers. The operator $\nabla_z$ is used to denote the gradient of a function, with respect to the argument implied by $z$.

\section{Problem Formulation}\label{sec:prob}

Consider a continuous-time affine nonlinear system of the form:
\begin{equation}\label{eq:sys}
\dot{x}(t)=f(x(t))+g(x(t))u(t),~x(0)=x_0, ~t\ge0,
\end{equation}
where $x(t)\in\mathbb{R}^n$ denotes the state with initial value $x_0\in\mathbb{R}^n$, $u(t)\in\mathbb{R}^m$ is the control input, and $f:\mathbb{R}^n\rightarrow\mathbb{R}^n$, $g:\mathbb{R}^n\rightarrow\mathbb{R}^{n\times m}$ are the system's drift and input dynamics functions, respectively. We assume that $f,~g$ are locally Lipschitz, so that existence and uniqueness of solutions to \eqref{eq:sys} can be established. Moreover, we assume that $f(0)=0$, so that the origin is an equilibrium point of the uncontrolled version of \eqref{eq:sys}.

The infinite horizon optimal control problem regards finding the feedback control policy that stabilizes system \eqref{eq:sys} to the origin, while having some form of optimality. By optimality, we mean that it minimizes the cost functional
\begin{equation}\label{eq:cost}
J(\mu;~x_0)=\int_0^\infty\left(Q(x)+\mu^\textrm{T}(x)R\mu(x) \right)\textrm{d}\tau,
\end{equation}
with respect to $\mu:\mathbb{R}^n\rightarrow\mathbb{R}^{m}$, where $Q:\mathbb{R}^n\rightarrow\mathbb{R}_+$ is a positive definite function penalizing the state trajectories, $R\in\mathbb{R}^{m\times m}$ is a positive definite matrix penalizing the control inputs, and the integration is over the trajectories of \eqref{eq:sys} under $u=\mu$. The integral \eqref{eq:cost} is well-defined in any compact set $\Omega\subseteq\mathbb{R}^n$ as long as $\mu$ is \textit{admissible} in that set.

\begin{definition}
A control policy $\mu:\mathbb{R}^n\rightarrow\mathbb{R}^m$ is defined to be admissible on a set $\Omega\subseteq\mathbb{R}^n$, and denoted as $\mu\in\Psi(\Omega)$, if it is continuous with $\mu(0)=0$, and if $u=\mu$ asymptotically stabilizes \eqref{eq:cost} to the origin for any $x_0\in\Omega$ while $J(\mu;~x_0)$ is finite.
\end{definition}

Denote the optimal control policy on a compact set $\Omega\subseteq\mathbb{R}^n$ as
\begin{equation*}
\mu^\star(x)=\arg\min_{\mu\in\Psi(\Omega)} J(\mu;~x),
\end{equation*}
and the corresponding optimal value function as $V^\star(\cdot)=J(\mu^\star;~\cdot)$. 
Then, by defining the Hamiltonian
\begin{equation*}
H(x,\mu(x),\nabla_xV(x))=\nabla_xV^\textrm{T}(x)(f(x)+g(x)\mu(x))+Q(x)+\mu^\textrm{T}(x)R\mu(x),
\end{equation*}
and using the stationarity condition $\frac{\partial H}{\partial \mu}=0$,  the optimal control policy can be shown to satisfy \cite{lewis2012optimal}:
\begin{equation}\label{eq:mustar}
\mu^\star(x)=-\frac{1}{2}R^{-1}g^\textrm{T}(x)\nabla_xV^{\star}(x).
\end{equation}
In addition, the optimal value function can be shown to satisfy the following PDE, also known as the steady-state HJB equation \cite{lewis2012optimal}:
\begin{equation}\label{eq:HJB}
\nabla_xV^{\star\textrm{T}}(x)f(x)+Q(x)-\frac{1}{4}\nabla_xV^{\star\textrm{T}}(x)g(x)R^{-1}g^\textrm{T}(x)\nabla_xV^\star(x)=0,~V^\star(0)=0.
\end{equation}
Clearly, if one can solve \eqref{eq:HJB} for the optimal value function $V^\star$, then it is straightforward to obtain the optimal control policy from the formula \eqref{eq:mustar}. Nevertheless, obtaining $V^\star$ from \eqref{eq:HJB} is not an easy task. This is because the PDE \eqref{eq:HJB} is nonlinear, yields multiple solutions even in the linear-quadratic case \cite{bernstein2009matrix},  and is thus very difficult to solve analytically. For this reason, the majority of the literature focuses on solving this equation approximately \cite{vamvoudakis2010online, abu2005nearly, jiang2014robust}. However, most of the existing approximate solution concepts either rely on prior knowledge of an admissible control input for \eqref{eq:sys}, on the use of polynomial basis functions for approximation, or on the execution of computationally expensive iterations, all of which can be restrictive requirements in practice.

Motivated by the aforementioned facts, we seek an alternative method to approximately solve the PDE \eqref{eq:HJB} for $V^\star$. Towards this end, we draw inspiration from the method of the so-called physics-informed neural networks, which have been extensively used in the literature to solve a general class of PDEs \cite{raissi2019physics}. Using these networks, we can exploit the structural aspect of the Hamilton-Jacobi Bellman equation, which is informed about the physics of the system through the system dynamics functions $f, g$, in order to obtain the optimal value function and controller.

\begin{remark}
While we could allow $\Omega$ to be any arbitrary set, within this work we restrict our attention to $\Omega$ being compact. This is because we will employ a neural network to approximate $V^\star$ and $\mu^\star$, and the uniform approximation properties of neural networks are valid only on compact sets. 
\end{remark}

\section{A Finite-Horizon Approximation with PINNs}\label{sec:PINNs}

In this section, we propose a procedure for approximating the infinite-horizon optimal value function $V^\star$ via the use of PINNs.  We note that one of the main difficulties in this direction is the fact that the steady-state HJB \eqref{eq:HJB} is not in the standard form for which PINNs have been developed \cite{raissi2019physics}.  In particular, the time-derivative term in the steady-state HJB equation \eqref{eq:HJB} is missing, and such a term is critical because it ensures uniqueness of solutions. Without such a term, solving \eqref{eq:HJB} is akin to an equilibrium search problem in ordinary differential equations, which, while having unique solutions given certain regularity assumptions, can admit multiple equilibrium points. 
Accordingly, applying PINNs to directly solve \eqref{eq:HJB} could cause the training procedure of PINNs to either stall or converge to a solution completely unrelated to the value function $V^\star$.

To deal with these two issues, note that \eqref{eq:HJB} is the steady-state version of another, time-varying HJB equation that admits a unique solution. In particular, it is the steady-state version of the PDE:
\begin{equation}\label{eq:HJBT}
0=\nabla_t V_T(x,t)+\nabla_x V_T^{\textrm{T}}(x,t)f(x)+Q(x)-\frac{1}{4}\nabla_x V_T^{\textrm{T}}(x,t)g(x)R^{-1}g^\textrm{T}(x)\nabla_x V_T(x,t),\quad V_T(x,T)=\phi(x),
\end{equation}
where $T>0$ is a positive horizon, and $\phi:\mathbb{R}^n\rightarrow\mathbb{R}_+$ is a positive semidefinite function. The unique solution $V_T$ of \eqref{eq:HJBT} is the optimal value function of the \textit{finite-horizon} approximation of \eqref{eq:cost}, namely of:
\begin{equation}\label{eq:costT}
    J_T(\mu;~x_0)=\int_{0}^{T}\Big(Q(x(\tau))+r\left(\mu(x(\tau),\tau)\right)\Big)\textrm{d}\tau+\phi(x(T)),
\end{equation}
where we note that the policy $\mu$ here is allowed to be time-varying. Hence, as $T$ increases, it is expected that the finite-horizon value function $V_T(\cdot,0)$ will pointwise approximate the infinite-horizon value function $V^\star$. In addition, it is expected that the control policy $\mu_T:\mathbb{R}^n\times[0,T]\rightarrow\mathbb{R}^m$ that minimizes \eqref{eq:costT}, and is given by: 
\begin{equation}\label{eq:muTstar}
\mu_T(x,t)=-\frac{1}{2}R^{-1}g^\textrm{T}(x)\nabla_x {V_T}(x,t),
\end{equation}
will approach $\mu^\star$ as $T$ increases, i.e.  that $\mu_T(x,0)$ will converge pointwise to $\mu^\star(x)$ for all $x\in\mathbb{R}^n$. Hence, one can proceed to approximate $\mu^\star$ and $V^\star$ by solving \eqref{eq:HJBT} via the method of PINNs for a sufficiently large horizon $T>0$. In the following sections, we will mathematically show that this approximation is valid, i.e., that $V_T(\cdot,0)$ indeed uniformly approximates $V^\star$ over $\Omega$ as $T$ increases, and a similar result holds for the optimal control.

Towards approximating the solution $V_T$ of the PDE \eqref{eq:HJBT} using PINNs,  we follow \cite{raissi2019physics} and consider the flow residual function on a compact set $\Omega\subseteq \mathbb{R}^n$ for any $v_T:\Omega\times[0,T]\rightarrow\mathbb{R}$:
\begin{equation*}
F_e(x, t; v_T)=\nabla_t v_T(x,t)+\nabla_x v_T^{\textrm{T}}(x,t)f(x)+Q(x)-\frac{1}{4}\nabla_x v_T^{\textrm{T}}(x,t)g(x)R^{-1}g^\textrm{T}(x)\nabla_x v_T(x,t),~x\in\Omega,t\in[0,T].
\end{equation*}
This residual informs us about how close $v_T$ is to satisfying the time-dependent HJB equation \eqref{eq:HJBT}, and is identically zero if and only if $v_T$ is a solution to \eqref{eq:HJBT} for any boundary condition. Similarly, we consider the following boundary residual function on $\Omega$:
\begin{equation*}
F_b(x; v_T)=v_T(x,T)-\phi(x),~x\in\Omega,
\end{equation*}
which is equal to zero if and only if $v_T$ satisfies the boundary condition of \eqref{eq:HJBT}. Hence, it follows that $v_T$ is a solution to \eqref{eq:HJBT} if and only if $F_e(\cdot,\cdot;v_T)$ and $F_b(\cdot;v_T)$ are identically zero. Finally, one can define the following problem-specific residual:
\begin{equation*}
F_{in}(t; v_T)=v_T(0,t),~t\in[0,T],
\end{equation*}
which, for any value function associated with the optimal stabilization problem, should be equal to zero for all $t\in[0,T]$.

To find a solution to \eqref{eq:HJBT}, a neural network $\hat{v}_T(\cdot,\cdot; w):\Omega\times[0,T]\rightarrow \mathbb{R}$ is constructed, where $w\in\mathbb{R}^N$ denote the network's parameters, and $\Omega\subseteq \mathbb{R}^n$ is a compact set where approximation will take place.  Such a neural network can have either one or multiple layers, and activation functions that could be sigmoids, RELUs, or radial-basis functions. 
The parameters $w$ are trained by attempting to force the residuals $F_e(x, t; \hat{v}_T)$, $F_b(x; \hat{v}_T)$, $F_{in}(t; \hat{v}_T)$ to be zero across a data set $(x^i_e,t^i_e)\in\Omega\times[0,T]$, $i=1,\ldots,N_e$,  $x^i_b\in\Omega$, $i=1,\ldots,N_b$, and $t^i_{in}\in[0,T]$, $i=1,\ldots,N_{in}$. This is done by defining the following mean square error (MSE)
\begin{equation}\label{eq:MSE}
\textrm{MSE}=\textrm{MSE}_e+\textrm{MSE}_b+\textrm{MSE}_{in},
\end{equation}
where 
\begin{align*}
\textrm{MSE}_e&=\frac{1}{N_e}\sum_{i=1}^{N_e}F_e(x^i_e, t^i_e; \hat{v}_T)^2,\\
\textrm{MSE}_b&=\frac{1}{N_b}\sum_{i=1}^{N_b}F_b(x^i_b; \hat{v}_T)^2,\\
\textrm{MSE}_{in}&=\frac{1}{N_{in}}\sum_{i=1}^{N_{in}}F_{in}(t^i_{in}; \hat{v}_T)^2
\end{align*}
and training $w$ so that the MSE given by \eqref{eq:MSE} is minimized.  This training can take place using off-the-shelf optimization algorithms, such as Adam \cite{kingma2014adam, raissi2019physics}.
 The name \textit{physics informed neural network} is usually used for the function $F_e(\cdot,\cdot,\hat{v}_T)$, because it is essentially a neural network with the same parameters $w$ as $\hat{v}_T$, but whose activation functions have been ``informed'' about the physics of the system through the underlying HJB equation.  Although strict theoretical guarantees of convergence do not exist for the method of PINNs, they have been empirically shown to perform well when the solution to the underlying PDE is unique and the neural network architecture is expressive enough \cite{raissi2019physics}. We describe the overall physics-informed learning procedure in Algorithm \ref{al:learning}. 

\begin{algorithm}[!t]
\caption{Physics-Informed Learning for Optimal Control}
\begin{algorithmic}[1]
\Procedure{}{}
\State Choose sufficiently large horizon length $T>0$, area of approximation $\Omega\subseteq\mathbb{R}^n$, and number of grid points $N_e, N_b, N_{in}>0$.
\State Sample points $(x^i_e,t^i_e)\in\Omega\times[0,T]$, $i=1,\ldots,N_e$,  $x^i_b\in\Omega$, $i=1,\ldots,N_b$, and $t^i_{in}\in[0,T]$, $i=1,\ldots,N_{in}$.
\State Choose the number of layers and basis functions for the neural network and construct $\hat{v}(\cdot, \cdot;~w)$.
\State Train $w$ to minimize MSE in \eqref{eq:MSE} using Adam\cite{kingma2014adam}.
\State Compute approximate optimal controller as $\hat{\mu}_T(x,t)=-\frac{1}{2}R^{-1}g^\textrm{T}(x)\nabla_x {\hat{v}_T}(x,t; w).$
\EndProcedure
\end{algorithmic}\label{al:learning}
\end{algorithm}

Apparently, in the context of PINNs, dealing with the HJB equation \eqref{eq:HJBT} is much more convenient than handling the HJB \eqref{eq:HJB}. Unlike \eqref{eq:HJB}, the HJB equation \eqref{eq:HJBT} is exactly the form for which PINNs have been developed \cite{raissi2019physics}. In addition, given that $V_T$ is continuously differentiable, then it is certain that \eqref{eq:HJBT} admits only one solution, equal to $V_T$; hence, the possibility of the PINN approximating a function entirely different than $V_T$ is excluded.  Nevertheless, simply substituting the infinite-horizon HJB with a finite-horizon one with a large horizon, and anticipating that the solution $V_T(\cdot,0)$ of the latter will be close to the solution $V^\star$ of the former, could possibly prove to be a careless action. In that respect, a couple of important questions must be answered to validate such an approach:

\begin{enumerate}
\item As the horizon $T$ increases, do the functions $V_T(\cdot,0)$ and $\mu_T(\cdot,0)$ provide \textit{uniform} approximations of the optimal value function $V^\star(\cdot)$ and control $\mu^\star(\cdot)$?


\item How can one evaluate, for a fixed $T$, whether the derived control policy $\mu_T(\cdot,0)$ is close enough to the infinite-horizon optimal control policy $\mu^\star(\cdot)$? In addition, if it is concluded that $\mu_T(\cdot,0)$ and $V_T(\cdot,0)$ are not close enough to $\mu^\star(\cdot)$ and $V^\star(\cdot)$, how can these functions be used to obtain better ones without completely restarting the training process of the PINN?
\end{enumerate}

The remainder of this paper is focused on answering these significant questions.

\begin{remark}
    While numerical methods could be used to approximately solve \eqref{eq:HJBT}, those have much higher computational complexity as they rely on very fine partitioning of the state space in order to avoid stability issues \cite{raissi2019physics}. At the same time, they cannot generalize outside the compact set $\Omega$ the same way that neural networks do.
\end{remark}

\section{Uniform Approximation of $V^\star$ and $\mu^\star$ using the Finite-Horizon HJB}\label{sec:unif}

The purpose of this section is to investigate whether $V_T(\cdot,0)$ and $\mu_T(\cdot,0)$ provide uniform approximations of the optimal value function $V^\star(\cdot)$ and control $\mu^\star(\cdot)$ as $T$ increases. Towards this end, notice that for any positive semidefinite function $\psi:\mathbb{R}^n\rightarrow\mathbb{R}$, the following inequality is true:
\begin{equation}\label{eq:sandwich}
V_T(\cdot,0; 0)\le V_T(\cdot,0; \psi) \le V_T(\cdot,0; \psi+V^\star),
\end{equation}
where, by abusing notation, the argument after the semicolon denotes the value of the terminal cost in \eqref{eq:costT}. That is, $V_T(\cdot,0; 0)$ is the optimal value of the cost \eqref{eq:costT} with $\phi\equiv0$, $V_T(\cdot,0; \psi)$ is the optimal value for $\phi\equiv\psi$, and $V_T(\cdot,0; \psi+V^\star)$ is the optimal value for $\phi\equiv\psi+V^\star$. Hence, if one can prove that $V_T(\cdot,0; 0)$ and $V_T(\cdot,0; \psi+V^\star)$ converge uniformly to $V^\star$, the same will also hold for $V_T(\cdot,0; \psi)$, i.e., for $V_T(\cdot,0)$ for any arbitrary terminal cost. The rest of this section is focused on proving these facts.

Before proceeding, and whenever necessary to increase clarity, we will denote as $x^\star$ the trajectories of \eqref{eq:sys} generated by the control $\mu^\star$, and as $x_T$ the trajectories of \eqref{eq:sys} generated by the control $\mu_T$. The notation $x_0$ will still be reserved for the initial condition.

\subsection{Uniform Approximation with $\phi\equiv0$}

In this subsection, we prove that for $\phi\equiv0$, $V_T(\cdot,0)$ converges uniformly to $V^\star$ on $\Omega$. Towards this end, we will first need a few intermediate results about the behavior of $V_T$ (for $\phi\equiv0)$ with respect to $T$.

\begin{lemma}\label{le:inc}
Assume that $\phi\equiv0$. Then, for all $x\in\mathbb{R}^n$, the sequence $V_{T}(x,0)$ is increasing with respect to $T$ and upper bounded by $V^\star(x)$, i.e., for every real $T_2\ge T_1 >0$, it holds that:
\begin{equation*}
V_{T_1}(x,0)\le V_{T_2}(x,0)\le V^\star(x),~\forall x\in\mathbb{R}^n.
\end{equation*}
In addition, $\lim_{T\rightarrow\infty}V_T(x,0)=L(x)$ for some Lebesgue measurable function $L:\mathbb{R}^n\rightarrow\mathbb{R}$.
\end{lemma}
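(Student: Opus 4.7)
\bigskip

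\noindent \textbf{Proof plan for Lemma \ref{le:inc}.} The overall strategy is to exploit the fact that, with $\phi\equiv0$, the finite-horizon cost \eqref{eq:costT} is simply a truncation of the infinite-horizon cost \eqref{eq:cost}, and the running cost $Q(x)+\mu^\textrm{T}R\mu$ is pointwise non-negative. Both monotonicity and the upper bound will follow from feasibility-based comparisons between the two problems; the existence of the limit will then be a consequence of the monotone convergence of bounded sequences of real numbers.

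The first step is to establish monotonicity. Fix $x\in\mathbb{R}^n$ and $T_2\ge T_1>0$. Let $\mu_{T_2}(\cdot,\cdot)$ be the optimizer of \eqref{eq:costT} with horizon $T_2$ and $\phi\equiv0$, which exists via \eqref{eq:muTstar} and the unique solvability of \eqref{eq:HJBT}. The restriction of $\mu_{T_2}$ to $[0,T_1]$ is an admissible (time-varying) control for the horizon-$T_1$ problem, so by sub-optimality and non-negativity of the integrand,
\begin{equation*}
V_{T_1}(x,0)\le \int_0^{T_1}\!\!\bigl(Q(x_{T_2}(\tau))+\mu_{T_2}^\textrm{T}R\mu_{T_2}\bigr)\,\textrm{d}\tau \le \int_0^{T_2}\!\!\bigl(Q(x_{T_2}(\tau))+\mu_{T_2}^\textrm{T}R\mu_{T_2}\bigr)\,\textrm{d}\tau =V_{T_2}(x,0),
\end{equation*}
which establishes the first inequality.

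For the upper bound $V_T(x,0)\le V^\star(x)$, the plan is to use the infinite-horizon optimal $\mu^\star$ as a feasible (suboptimal) control for the finite-horizon problem. Denoting by $x^\star$ the trajectory produced by $\mu^\star$, another application of non-negativity yields
\begin{equation*}
V_T(x,0)\le \int_0^T\!\!\bigl(Q(x^\star(\tau))+\mu^{\star\textrm{T}}R\mu^\star\bigr)\,\textrm{d}\tau \le \int_0^\infty\!\!\bigl(Q(x^\star(\tau))+\mu^{\star\textrm{T}}R\mu^\star\bigr)\,\textrm{d}\tau = V^\star(x),
\end{equation*}
valid for any $x$ at which $V^\star$ is finite (e.g.\ $x\in\Omega$, where $\mu^\star\in\Psi(\Omega)$); for $x$ at which $V^\star(x)=+\infty$ the bound is vacuous.

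Combining the two preceding displays, $\{V_T(x,0)\}_{T>0}$ is a monotone non-decreasing family that is bounded above by $V^\star(x)$ for every fixed $x$, and therefore converges pointwise to some $L(x):=\lim_{T\to\infty}V_T(x,0)\in[0,V^\star(x)]$. Measurability of $L$ is inherited from the pointwise monotone convergence of the $V_T(\cdot,0)$, each of which is continuous on $\Omega$ by standard regularity results for the finite-horizon HJB; hence $L$ is (lower semi-continuous and in particular) Borel, hence Lebesgue, measurable. The main subtle point here, and the step I expect to require the most care, is the feasibility of restricting $\mu_{T_2}$ to $[0,T_1]$ in the monotonicity step, since the restricted control may no longer be given by the closed-loop formula \eqref{eq:muTstar}; this, however, is not an obstacle because admissibility in the finite-horizon sense requires only integrability of the cost, which is automatic by non-negativity together with finiteness of the horizon.
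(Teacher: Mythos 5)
Your proposal is correct and follows essentially the same route as the paper's proof: monotonicity via using the restriction of $\mu_{T_2}$ to $[0,T_1]$ as a suboptimal policy for the $T_1$-horizon problem together with non-negativity of the running cost, the upper bound via using $\mu^\star$ as a suboptimal policy for the finite-horizon problem, and the limit via monotone bounded convergence. Your explicit remarks on the feasibility of the restricted control and on measurability of the pointwise limit are slightly more careful than the paper's, but the substance is identical.
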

\begin{proof}
Let $0<T_1<T_2<\infty$. For every $x_0\in \mathbb{R}^n$, we have:
\begin{align}
\nonumber \hspace{-2mm} V_{T_2}(x_0,0)=J_{T_2}(\mu_{T_2};~x_0)&=\int_{0}^{T_2}\Big(Q(x_{T_2}(\tau))+r\left(\mu_{T_2}(x_{T_2}(\tau),\tau)\right)\Big)\textrm{d}\tau\\\nonumber&=\int_{0}^{T_1}\Big(Q(x_{T_2}(\tau))+r\left(\mu_{T_2}(x_{T_2}(\tau),\tau)\right)\Big)\textrm{d}\tau+\int_{T_1}^{T_2}\Big(Q(x_{T_2}(\tau))+r\left(\mu_{T_2}(x_{T_2}(\tau),\tau)\right)\Big)\textrm{d}\tau \\ &\ge \int_{0}^{T_1}\Big(Q(x_{T_2}(\tau))+r\left(\mu_{T_2}(x_{T_2}(\tau),\tau)\right)\Big)\textrm{d}\tau.\label{eq:T2T1a}
\end{align}
However, since $\mu_{T_1}$ minimizes the cost $J_{T_1}(\cdot;~x_0)$:
\begin{equation}\label{eq:T2T1b}
\int_{0}^{T_1}\Big(Q(x_{T_2}(\tau))+r\left(\mu_{T_2}(x_{T_2}(\tau),\tau)\right)\Big)\textrm{d}\tau=J_{T_1}(\mu_{T_2};~x)\ge J_{T_1}(\mu_{T_1};~x)=V_{T_1}(x,0).
\end{equation}
Hence, combining \eqref{eq:T2T1a}-\eqref{eq:T2T1b}, we obtain that $V_{T_2}(x_0,0)\ge V_{T_1}(x_0,0)$, for all $x_0\in \mathbb{R}^n$. The fact that $V_{T}(x_0,0)\le V^\star(x_0)$ for all $x_0\in\mathbb{R}^n$ and $T>0$ can be proved using an identical procedure, by setting $T_1=T$ while taking the limit $T_2\rightarrow\infty$. 

To conclude, note that for all $x\in\mathbb{R}^n$, the sequence $V_T(x,0)$ is increasing with respect to $T$ and bounded above by $V^\star(x)$. Therefore, $\lim_{T\rightarrow\infty}V_T(x,0)=L(x)$ for some Lebesgue measurable function $L:\mathbb{R}^n\rightarrow\mathbb{R}$,  $\forall x\in\mathbb{R}^n$. 
\end{proof}

\begin{lemma}\label{le:xt}
Assume that $\phi\equiv0$. Then, it holds that $\lim_{T\rightarrow\infty}x_T(T)=0$.
\end{lemma}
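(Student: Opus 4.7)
The plan is to exploit the uniform integrability bound
\[
\int_0^T Q(x_T(\tau))\,\textrm{d}\tau \le V_T(x_0,0) \le V^\star(x_0) < \infty,
\]
which is immediate from Lemma~\ref{le:inc} together with nonnegativity of the stage cost, and combine it with the dynamic programming structure of the finite-horizon problem to pin $x_T(T)$ down near the origin for large $T$. Under the standard assumption that $Q$ is positive definite with $\inf_{\|x\|\ge\epsilon}Q(x)>0$ for each $\epsilon>0$, a Markov-type inequality shows that the Lebesgue measure of $\{\tau\in[0,T]:\|x_T(\tau)\|\ge\epsilon\}$ is at most $M(\epsilon):=V^\star(x_0)/\inf_{\|x\|\ge\epsilon}Q(x)$, uniformly in $T$. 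A pigeonhole argument over the window $[T-M(\epsilon)-1,T]$ then produces, for every sufficiently large $T$, some $T_1$ in this window with $\|x_T(T_1)\|<\epsilon$.

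Next, by Bellman's principle and time-invariance of~\eqref{eq:sys}, the restriction of $x_T$ to $[T_1,T]$ coincides with the optimal trajectory of the $(T-T_1)$-horizon problem started at $x_T(T_1)$, so the cost incurred over this window equals $V_{T-T_1}(x_T(T_1),0) \le V^\star(x_T(T_1))$. Continuity of $V^\star$ at the origin (with $V^\star(0)=0$, a consequence of admissibility of the zero control at $x_0=0$) then makes both $\int_{T_1}^T Q(x_T)\,\textrm{d}\tau$ and $\int_{T_1}^T \mu_T^\textrm{T} R\mu_T\,\textrm{d}\tau$ arbitrarily small by shrinking $\epsilon$. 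Reapplying Markov's inequality on $[T_1,T]$ then produces a time $\tau^\star$ arbitrarily close to $T$ at which $\|x_T(\tau^\star)\|$ is itself arbitrarily small.

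Finally, a Cauchy--Schwarz estimate on $\int_{\tau^\star}^T \|\mu_T\|\,\textrm{d}\tau$ obtained from the $L^2$ bound on $\mu_T$, combined with the local Lipschitz continuity of $f,g$, yields a bound on $\|x_T(T)-x_T(\tau^\star)\|$ that vanishes as $\epsilon\to 0$. Consequently $\limsup_{T\to\infty}\|x_T(T)\|\le\epsilon'$ for any prescribed $\epsilon'>0$, which gives the claim. The main obstacle in carrying this plan out will be ensuring that $x_T$ stays in a fixed compact set $K$ uniformly in $T$ so that $f,g$ remain uniformly bounded along the trajectory; this can be secured under the standard coercivity condition $Q(x)\to\infty$ as $\|x\|\to\infty$, which together with the uniform $L^1$ bound on $Q(x_T)$ precludes $x_T$ from escaping any sufficiently large ball.
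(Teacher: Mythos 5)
Your argument reaches the right conclusion but by a genuinely different route from the paper. The paper compares $V_T(x_0,0)$ with $V_{T/2}(x_0,0)$: it writes $V_T(x_0,0)\ge V_{T/2}(x_0,0)+\int_{T/2}^{T}\big(Q(x_T)+\mu_T^{\textrm{T}}R\mu_T\big)\,\textrm{d}\tau$, invokes the common limit $L(x_0)$ from Lemma \ref{le:inc} to conclude that the cost accrued on $[T/2,T]$ vanishes as $T\to\infty$, and then asserts that this forces $x_T(T)\to 0$. You instead use a Markov/pigeonhole argument to locate a time $T_1$ near $T$ with $\|x_T(T_1)\|<\epsilon$, apply the principle of optimality together with $V_{T-T_1}(\cdot,0)\le V^\star(\cdot)$ and continuity of $V^\star$ at the origin to make the cost-to-go from $T_1$ small, and finish with a Cauchy--Schwarz/Lipschitz ODE estimate transferring smallness of $\|x_T(\tau^\star)\|$ to $\|x_T(T)\|$. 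This is longer, but it buys something real: the paper's final inference (a vanishing integral over the growing window $[T/2,T]$ implies the endpoint vanishes) is precisely the step that needs the trajectory-regularity argument you supply, so your version closes a gap the paper leaves implicit.

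One soft spot: your proposed cure for trajectory confinement does not work as stated. Coercivity of $Q$ together with $\int_0^T Q(x_T)\,\textrm{d}\tau\le V^\star(x_0)$ bounds the Lebesgue measure of the time spent outside a large ball, not the distance the trajectory can travel during such an excursion, so it does not by itself keep $x_T$ in a fixed compact set uniformly in $T$. Fortunately, you only need $f$ and $g$ bounded on the short window $[\tau^\star,T]$, where $\|x_T(\tau^\star)\|<\delta$, $T-\tau^\star$ is small, and the $L^2$ norm of $\mu_T$ is small; a standard continuation/bootstrap argument then keeps $x_T$ inside, say, the unit ball on that window, which is all your Lipschitz estimate requires. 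Note also that you quietly strengthen the paper's hypotheses: $\inf_{\|x\|\ge\epsilon}Q(x)>0$ and continuity of $V^\star$ at $0$ are not assumed explicitly in the paper (the latter is used in the proof of Lemma \ref{le:point} as well); both are mild once the trajectory is confined, but they should be stated.
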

\begin{proof}
For each $T>0$, we have:
\begin{align}
\nonumber V_{T}(x_0,0)=J_{T}(x_0,\mu_{T})&=\int_{0}^{T}\Big(Q(x_{T}(\tau))+r\left(\mu_{T}(x_{T}(\tau),\tau)\right)\Big)\textrm{d}\tau\\\nonumber &=\int_{0}^{\frac{T}{2}}\Big(Q(x_{T}(\tau))+r\left(\mu_{T}(x_{T}(\tau),\tau)\right)\Big)\textrm{d}\tau +\int_{\frac{T}{2}}^{T}\Big(Q(x_{T}(\tau))+r\left(\mu_{T}(x_{T}(\tau),\tau)\right)\Big)\textrm{d}\tau.
\end{align}
Hence
\begin{align}\nonumber
V_{T}(x_0,0) &= J_\frac{T}{2}(x_0,\mu_{T})+\int_{\frac{T}{2}}^{T}\Big(Q(x_{T}(\tau))+r\left(\mu_{T}(x_{T}(\tau),\tau)\right)\Big)\textrm{d}\tau\nonumber\\\nonumber &\ge  J_\frac{T}{2}(x_0,\mu_{\frac{T}{2}})+\int_{\frac{T}{2}}^{T}\Big(Q(x_{T}(\tau))+r\left(\mu_{T}(x_{T}(\tau),\tau)\right)\Big)\textrm{d}\tau\\ &=V_{\frac{T}{2}}(x_0,0)+\int_{\frac{T}{2}}^{T}\Big(Q(x_{T}(\tau))+r\left(\mu_{T}(x_{T}(\tau),\tau)\right)\Big)\textrm{d}\tau.\label{eq:V2T}
\end{align}
Note that $\lim_{T\rightarrow\infty}V_\frac{T}{2}(x_0,0)=\lim_{T\rightarrow\infty}V_{T}(x_0,0)=L(x_0)$ according to Lemma \ref{le:inc}. Applying this to \eqref{eq:V2T} yields:
\begin{align*}
&\lim_{T\rightarrow\infty}\int_{\frac{T}{2}}^{T}\Big(Q(x_{T}(\tau))+r\left(\mu_{T}(x_{T}(\tau),\tau)\right)\Big)\textrm{d}\tau\le 0   \Longrightarrow \lim_{T\rightarrow\infty} \int_{\frac{T}{2}}^{T}\Big(Q(x_{T}(\tau))+r\left(\mu_{T}(x_{T}(\tau))\right)\Big)\textrm{d}\tau=0.
\end{align*}
Since $Q(\cdot)$ is positive definite and $R\succ0$, the energy of the state and the control converge to zero in the steady state as $T\rightarrow\infty$, which may hold only if $\lim_{T\rightarrow\infty} x_{T}(T)= 0$. 
\end{proof}

The two results above can be exploited to prove that, when $\phi\equiv0$, $V_T(\cdot,0)$ \textit{pointwise} approximates the optimal value function $V^\star(\cdot)$ as $T$ increases.

\begin{lemma}\label{le:point}
Assume that $\phi\equiv0$. Then, for each fixed $x\in\mathbb{R}^n$:
\begin{align*}
&\lim_{T\rightarrow\infty }V_T(x,0)=V^\star(x).
\end{align*}
\end{lemma}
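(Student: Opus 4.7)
The plan is to combine the two prior lemmas with a standard concatenation argument. Lemma \ref{le:inc} already yields the one-sided bound $L(x_0) := \lim_{T\to\infty}V_T(x_0,0) \le V^\star(x_0)$, so the work is to establish the reverse inequality $V^\star(x_0)\le L(x_0)$ for each fixed $x_0$.

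For the reverse inequality, I would construct an explicit admissible infinite-horizon control by \emph{concatenating} the finite-horizon optimizer with the infinite-horizon optimizer. Concretely, fix $x_0$ and, for each $T>0$, define $\tilde{\mu}_T(x,t) = \mu_T(x,t)$ for $t\in[0,T]$, and $\tilde{\mu}_T(x,t)=\mu^\star(x)$ for $t>T$ (the latter being applied from the state $x_T(T)$ reached at the switching instant). By the additivity of the running cost and the definitions of $V_T(\cdot,0)$ and $V^\star(\cdot)$, the induced cost splits as
\begin{equation*}
J(\tilde{\mu}_T;x_0) \;=\; J_T(\mu_T;x_0) \;+\; J(\mu^\star;x_T(T)) \;=\; V_T(x_0,0) \;+\; V^\star(x_T(T)).
\end{equation*}
Since $V^\star$ is the infimum of $J(\mu;x_0)$ over admissible controls, this concatenation yields the upper bound $V^\star(x_0)\le V_T(x_0,0)+V^\star(x_T(T))$.

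Next I would send $T\to\infty$. By Lemma \ref{le:inc}, the first term converges to $L(x_0)$. By Lemma \ref{le:xt}, $x_T(T)\to 0$, so the continuity of $V^\star$ at the origin together with $V^\star(0)=0$ (a consequence of the running cost vanishing at the origin and of $f(0)=0$, so that the trivial trajectory $x\equiv0$ with $u\equiv0$ attains zero cost) forces the second term to vanish. We conclude $V^\star(x_0)\le L(x_0)$, which combined with Lemma \ref{le:inc} gives $\lim_{T\to\infty}V_T(x_0,0)=V^\star(x_0)$, as required.

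The delicate points, in my view, are the two technical conditions needed for the concatenation to go through. First, the policy $\tilde{\mu}_T$ must be admissible in the sense of $\Psi(\Omega)$, which requires $x_T(T)$ to lie in $\Omega$ (or at least in the domain of admissibility of $\mu^\star$) for all sufficiently large $T$—this is where Lemma \ref{le:xt} is essential, since it forces $x_T(T)$ into any neighborhood of the origin. Second, one needs $V^\star(x_T(T))\to 0$, which is immediate from continuity of $V^\star$ at zero but should be acknowledged as the one standing regularity hypothesis; if continuity is not postulated, one can alternatively exhibit any locally admissible controller near the origin whose cost tends to zero as the initial state approaches $0$, producing the same asymptotic bound. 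Everything else is routine bookkeeping with the additivity of the cost functional.
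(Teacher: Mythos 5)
Your proposal is correct and reaches the same key inequality as the paper, $V^\star(x_0)\le V_T(x_0,0)+V^\star(x_T(T))$, but by a genuinely different route. The paper differentiates $V^\star$ along the trajectories generated by $\mu_T$, substitutes the steady-state HJB \eqref{eq:HJB} and the formula \eqref{eq:mustar} for $\mu^\star$, and integrates to obtain the exact completion-of-squares identity $V_T(x_0,0)=V^\star(x_0)-V^\star(x_T(T))+\int_0^T(\mu^\star-\mu_T)^\textrm{T}R(\mu^\star-\mu_T)\,\textrm{d}t$; dropping the nonnegative integral gives the inequality, and the conclusion then follows exactly as in your last step (Lemma \ref{le:inc} for the upper bound, Lemma \ref{le:xt} plus continuity of $V^\star$ at the origin for the squeeze). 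Your concatenation argument obtains only the inequality, not the identity, but it is more elementary: it does not require differentiating $V^\star$ along trajectories, only the dynamic programming principle and additivity of the running cost. The paper's identity is the stronger statement --- the residual term $\int_0^T(\mu^\star-\mu_T)^\textrm{T}R(\mu^\star-\mu_T)\,\textrm{d}t$ quantifies the value gap in terms of the control mismatch, which is the kind of information later used to argue $\mu_T\rightarrow\mu^\star$ in Theorem \ref{th:ucon}. One point you should be more careful about: the paper defines $V^\star$ as a minimum over the class $\Psi(\Omega)$ of \emph{stationary} continuous feedbacks, whereas your concatenated policy $\tilde{\mu}_T$ is time-varying, so the step $V^\star(x_0)\le J(\tilde{\mu}_T;x_0)$ does not follow literally from that definition; it requires either the dynamic programming principle for the larger class of time-varying controls, or precisely the verification identity that the paper computes. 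This is a standard and repairable point (the paper itself invokes the principle of optimality in the same loose way in Lemma \ref{le:ucon2}), but it is the one place where your argument silently leans on the machinery it was trying to avoid. Your handling of the admissibility of the tail (via Lemma \ref{le:xt} forcing $x_T(T)$ into a neighborhood of the origin) and of $V^\star(x_T(T))\rightarrow0$ matches the paper's use of continuity and $V^\star(0)=0$.
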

\begin{proof}
Taking the time derivative of $V^\star$ over the trajectories generated by the control input $\mu_T$, we have
\begin{equation}
\begin{split}\label{eq:dotVs}
\dot{V}^\star(x)=&\nabla_x V^{\star\textrm{T}}(x)(f(x)+g(x)\mu_T(x,t))\\=&\nabla_x V^{\star\textrm{T}}(f(x)+g(x)\mu^\star(x))+\nabla_x V^{\star\textrm{T}} (x)g(x)(\mu_T(x,t)-\mu^\star(x)).
\end{split}
\end{equation}
From \eqref{eq:mustar}-\eqref{eq:HJB}, we obtain the following relations:
\begin{align*}
&\nabla_x V^{\star\textrm{T}}(f(x)+g(x)\mu^\star(x))=-Q(x)-\mu^{\star \textrm{T}}(x)R\mu^\star(x),\\
&\nabla_x V^{\star\textrm{T}}(x) g(x)=-2\mu^{\star\textrm{T}}(x)R.
\end{align*}
Combining these with \eqref{eq:dotVs} yields
\begin{equation*}
\dot{V}^\star(x)=-Q(x)-\mu^{\star \textrm{T}}(x)R\mu^\star(x)-2\mu^{\star\textrm{T}}(x)R(\mu_T(x,t)-\mu^\star(x)),
\end{equation*}
or equivalently
\begin{align*}
\dot{V}^\star(x)=&-Q(x)-\mu_T^{\textrm{T}}(x,t)R\mu_T(x,t)+(\mu^\star(x)-\mu_T(x,t))^\textrm{T}R(\mu^\star(x)-\mu_T(x,t)).
\end{align*}
Integrating this equation over $t\in[0,T]$, and recalling that the trajectories of $x$ are generated by the control $\mu_T$, we get:
\begin{align}\label{eq:dotVs2}
&V^\star(x_T(T))-V^\star(x_0)=-\int_0^T(Q(x_T)+\mu_T^{\textrm{T}}(x_T,t)R\mu_T(x_T,t))\textrm{d}t+\int_0^T(\mu^\star(x_T)-\mu_T(x_T,t))^\textrm{T}R(\mu^\star(x_T)-\mu_T(x_T,t))\textrm{d}t.
\end{align}
Recalling again that the integrals are taken over the trajectories generated by $\mu_T$, and since $\phi\equiv0$, we have by \eqref{eq:costT} and \eqref{eq:dotVs2} that:
\begin{equation*}
V^\star(x_T(T))-V^\star(x_0)=-V_T(x_0,0)+\int_0^T(\mu^\star(x_T)-\mu_T(x_T,t))^\textrm{T}R(\mu^\star(x_T)-\mu_T(x_T,t))\textrm{d}t
\end{equation*}
or equivalently
\begin{align}\label{eq:dotVs3}
&V_T(x_0,0)=V^\star(x_0)-V^\star(x_T(T))+\int_0^T(\mu^\star(x_T)-\mu_T(x_T,t))^\textrm{T}R(\mu^\star(x_T)-\mu_T(x_T,t))\textrm{d}t.
\end{align}
Since $R\succ0$, this implies that $V_T(x_0,0)\ge V^\star(x_0)-V^\star(x_T(T))$. In addition, from Lemma \ref{le:inc} we have $V_T(x_0,0) \le V^\star(x_0)$, hence overall:
\begin{equation}\label{eq:ineq2}
V^\star(x_0)-V^\star(x_T(T))\le V_T(x_0,0) \le V^\star(x_0).
\end{equation}
It holds by continuity and Lemma \ref{le:xt} that $\lim_{T\rightarrow\infty}V^\star(x_T(T))=V^\star(\lim_{T\rightarrow\infty}x_T(T))=V^\star(0)=0$. Therefore, 
taking the limit as $T\rightarrow\infty$ in \eqref{eq:ineq2}, we obtain:
\begin{equation*}
V^\star(x_0)\le \lim_{T\rightarrow\infty} V_T(x_0,0) \le V^\star(x_0),
\end{equation*}
i.e. $\lim_{T\rightarrow\infty} V_T(x_0,0) = V^\star(x_0).$
\end{proof}
Finally, the uniform convergence of $V_T(\cdot,0)$ to $V^\star$ under $\phi=0$ follows by combining Lemmas \ref{le:inc} and \ref{le:point}.
\begin{lemma}\label{le:ucon1}
Assume that $\phi\equiv0$. Then, $V_T(\cdot,0)$ converges uniformly to $V^\star$ on $\Omega$.
\end{lemma}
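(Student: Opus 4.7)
The plan is to invoke Dini's theorem on the compact set $\Omega$. Recall that Dini's theorem states the following: if a monotone sequence of continuous real-valued functions on a compact set converges pointwise to a continuous limit, then the convergence is in fact uniform. Lemmas \ref{le:inc} and \ref{le:point} already supply three of the four hypotheses — monotonicity in $T$, pointwise convergence, and (implicitly, by continuity of the data of the problem) continuity of the limit $V^\star$ on $\Omega$. The remaining ingredient is continuity of each $V_T(\cdot,0)$ on $\Omega$, which follows from standard regularity theory for the finite-horizon value function: $V_T$ is the unique classical (or, at worst, viscosity) solution of the HJB \eqref{eq:HJBT}, and under the local Lipschitz assumptions already imposed on $f,g$ together with the continuity of $Q$ and $\phi\equiv 0$, $V_T(\cdot,0)$ is continuous on $\Omega$.

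First I would state the three bullets — continuity of $V_T(\cdot,0)$ and of $V^\star$ on $\Omega$, monotone non-decrease of $V_T(x,0)$ in $T$ from Lemma \ref{le:inc}, and pointwise convergence $V_T(x,0)\to V^\star(x)$ on $\Omega$ from Lemma \ref{le:point}. Then I would bridge the discrete-versus-continuous parameter gap: pick any increasing sequence $T_n\uparrow\infty$ and apply Dini's theorem to the sequence $\{V_{T_n}(\cdot,0)\}_{n\in\mathbb{N}}$ on $\Omega$ to conclude $\sup_{x\in\Omega}|V_{T_n}(x,0)-V^\star(x)|\to 0$. To strengthen this to convergence as the continuous parameter $T\to\infty$, I would note that by monotonicity, for any $T\in[T_n,T_{n+1}]$ and any $x\in\Omega$,
\begin{equation*}
0\le V^\star(x)-V_T(x,0)\le V^\star(x)-V_{T_n}(x,0),
\end{equation*}
so $\sup_{x\in\Omega}|V_T(x,0)-V^\star(x)|\le \sup_{x\in\Omega}|V_{T_n}(x,0)-V^\star(x)|\to 0$, which yields the desired uniform convergence.

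The main obstacle I anticipate is not the Dini step itself, but the continuity claim for $V_T(\cdot,0)$ on $\Omega$. If the authors prefer to avoid invoking viscosity-solution machinery, one can instead give a direct argument: continuity with respect to the initial condition follows from continuous dependence of the state trajectory on $x_0$ (guaranteed by local Lipschitzness of $f$ and of the feedback $\mu_T$, which in turn is continuous by \eqref{eq:muTstar} together with $C^1$-regularity of $V_T$), combined with continuity of the integrand $Q+\mu_T^\textrm{T} R\mu_T$ along trajectories. Once this is in place, Dini's theorem closes the argument cleanly, and no further estimate on the rate of convergence is needed.
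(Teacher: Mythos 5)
Your proposal is correct and follows essentially the same route as the paper, which likewise combines Lemmas \ref{le:inc} and \ref{le:point} with Dini's theorem on the compact set $\Omega$. The only difference is that you additionally justify two points the paper's proof leaves implicit --- the continuity of each $V_T(\cdot,0)$ and the passage from a sequence $T_n\uparrow\infty$ to the continuous parameter $T$ via monotonicity --- both of which are worthwhile refinements rather than departures from the argument.
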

\begin{proof}
From Lemmas \ref{le:inc} and \ref{le:point}, the sequence of continuous functions $V_T(\cdot,0)$ is increasing and converges pointwise everywhere on $\Omega$ to the continuous function $V^\star$. Since $\Omega$ is compact, it follows from Dini's theorem  (Theorem 7.13 in \cite{rudin1976principles})  that the mode of convergence is uniform.
\end{proof}

\subsection{Uniform Approximation with $\phi\equiv\psi+V^\star$}

In this subsection, we prove a more straightforward result: that for $\phi\equiv\psi+V^\star$, where $\psi$ is positive semidefinite, $V_T(\cdot,0)$ once again converges uniformly to $V^\star$ on $\Omega$.

\begin{lemma}\label{le:ucon2}
Assume that $\phi\equiv\psi+V^\star$ for some positive semidefinite, continuous function $\psi$. Then, $V_T(\cdot,0)$ converges uniformly to $V^\star$ on $\Omega$.
\end{lemma}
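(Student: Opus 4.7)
The plan is to sandwich $V_T(x_0,0)$ tightly between $V^\star(x_0)$ and $V^\star(x_0) + \psi(x^\star(T;x_0))$, where $x^\star(\cdot;x_0)$ denotes the infinite-horizon optimal trajectory from $x_0$, and then argue that the overhead $\psi(x^\star(T;x_0))$ vanishes uniformly in $x_0\in\Omega$ as $T\to\infty$.

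For the upper bound, I would insert the time-invariant policy $\mu^\star$ into the finite-horizon cost \eqref{eq:costT} and use the fact that, since $\mu^\star$ is admissible on $\Omega$, $\int_0^T(Q(x^\star)+\mu^{\star\textrm{T}}R\mu^\star)\textrm{d}\tau = V^\star(x_0) - V^\star(x^\star(T;x_0))$. The $V^\star$-component of the terminal cost $\phi=\psi+V^\star$ then cancels with $-V^\star(x^\star(T;x_0))$, leaving $V_T(x_0,0)\le J_T(\mu^\star;x_0)= V^\star(x_0)+\psi(x^\star(T;x_0))$.

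For the lower bound, I would reuse equation \eqref{eq:dotVs2} from the proof of Lemma \ref{le:point}, which was derived by integrating $\dot V^\star$ along the $\mu_T$-trajectory and is independent of the terminal cost. Substituting it into $V_T(x_0,0) = \int_0^T(Q(x_T)+\mu_T^{\textrm{T}}R\mu_T)\textrm{d}\tau + \psi(x_T(T)) + V^\star(x_T(T))$ produces the telescoping identity
\begin{equation*}
V_T(x_0,0) = V^\star(x_0) + \int_0^T(\mu^\star(x_T)-\mu_T(x_T,t))^{\textrm{T}}R(\mu^\star(x_T)-\mu_T(x_T,t))\textrm{d}\tau + \psi(x_T(T)),
\end{equation*}
and since $R\succ 0$ and $\psi\ge 0$, both added terms are nonnegative, yielding $V_T(x_0,0)\ge V^\star(x_0)$.

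The main obstacle is the third step: upgrading the pointwise decay $\psi(x^\star(T;x_0))\to 0$ into uniform decay over $\Omega$. My approach is to note that along the $\mu^\star$-trajectory, the map $T\mapsto V^\star(x^\star(T;x_0))$ is continuous in $x_0$, monotonically non-increasing in $T$ (because $\dot V^\star = -Q-\mu^{\star\textrm{T}}R\mu^\star\le 0$), and converges pointwise on $\Omega$ to $V^\star(0)=0$ by admissibility of $\mu^\star$. Dini's theorem on the compact set $\Omega$ then promotes this to uniform convergence $V^\star(x^\star(T;x_0))\to 0$; positive-definiteness and continuity of $V^\star$ on a compact neighborhood of the origin convert this into uniform convergence $x^\star(T;x_0)\to 0$; and continuity of $\psi$ at the origin together with $\psi(0)=0$ finally gives $\sup_{x_0\in\Omega}\psi(x^\star(T;x_0))\to 0$. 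Combined with the sandwich $V^\star(x_0)\le V_T(x_0,0)\le V^\star(x_0)+\psi(x^\star(T;x_0))$, this yields $\sup_{x_0\in\Omega}|V_T(x_0,0)-V^\star(x_0)|\to 0$, completing the argument.
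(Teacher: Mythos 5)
Your proposal is correct and follows the same overall skeleton as the paper's proof: sandwich $V_T(x_0,0)$ between $V^\star(x_0)$ and $V^\star(x_0)+\psi(x^\star(T))$, then show the overhead vanishes uniformly. The upper bound is identical to the paper's \eqref{eq:Jt1}. Your two sub-arguments differ in instructive ways. For the lower bound, the paper drops $\psi\ge0$ from the terminal cost and invokes the principle of optimality to get $J_T(\mu;x_0)\ge V^\star(x_0)$ for every $\mu$ (equations \eqref{eq:Jt2}--\eqref{eq:Jt3}), whereas you recycle the completion-of-squares identity \eqref{eq:dotVs2} along the $\mu_T$-trajectory; both are valid (that identity indeed does not depend on the terminal cost, only on integrating $\dot V^\star$ along whatever trajectory $\mu_T$ generates), and yours has the side benefit of exhibiting the exact gap $\int_0^T(\mu^\star-\mu_T)^{\textrm{T}}R(\mu^\star-\mu_T)\,\textrm{d}\tau+\psi(x_T(T))$ rather than just an inequality. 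For the final step, the paper simply asserts that asymptotic stabilizability of $\mu^\star$ on $\Omega$ gives $\lim_{T\to\infty}\max_{x_0\in\Omega}\psi(x^\star(T))=0$; this tacitly requires the decay of $x^\star(T;x_0)$ to be uniform in $x_0$, which does not follow from pointwise asymptotic stability alone. Your chain --- monotone decay of $T\mapsto V^\star(x^\star(T;x_0))$, Dini's theorem on compact $\Omega$, positive definiteness of $V^\star$ to recover uniform smallness of the state, and continuity of $\psi$ with $\psi(0)=0$ --- supplies exactly the missing uniformity argument, mirroring the Dini device already used in Lemma \ref{le:ucon1}. So your write-up is, if anything, more complete than the paper's on this point; the only hypotheses you should state explicitly are continuous dependence of $x^\star(\cdot;x_0)$ on $x_0$ (so that each $x_0\mapsto V^\star(x^\star(T;x_0))$ is continuous, as Dini requires) and boundedness of the closed-loop trajectories in a compact set, both of which follow from the standing admissibility and Lipschitz assumptions.
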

\begin{proof}
Under this terminal cost, we have from \eqref{eq:costT}:
\begin{equation}\label{eq:Jt1}
    V_T(x_0, 0)\le J_T(\mu^\star;~x_0)=\int_{0}^{T}\Big(Q(x^\star(\tau))+r\left(\mu^\star(x^\star(\tau))\right)\Big)\textrm{d}\tau+V^\star(x^\star(T))+\psi(x^\star(T))=V^\star(x_0)+\psi(x^\star(T)).
\end{equation}
On the other hand:
\begin{equation}\nonumber
    J_T(\mu;~x_0)=\int_{0}^{T}\Big(Q(x(\tau))+r\left(\mu(x(\tau),\tau)\right)\Big)\textrm{d}\tau+\psi(x(T))+V^\star(x(T))\ge \int_{0}^{T}\Big(Q(x(\tau))+r\left(\mu(x(\tau),\tau)\right)\Big)\textrm{d}\tau+V^\star(x(T)).
\end{equation}
By the principle of optimality, the right-hand side is minimized when $\mu=\mu^\star$. Hence:
\begin{equation}\label{eq:Jt2}
    J_T(\mu;~x_0)\ge \int_{0}^{T}\Big(Q(x^\star(\tau))+r\left(\mu^\star(x^\star(\tau))\right)\Big)\textrm{d}\tau+V^\star(x^\star(T))=V^\star(x_0).
\end{equation}
Finally, setting $\mu=\mu_T$, one has $J_T(\mu_T;~x_0)=V_T(x_0,0)$, hence \eqref{eq:Jt2} gives 
\begin{equation}\label{eq:Jt3}
V_T(x_0,0)\ge V^\star(x_0).
\end{equation}
 Combining \eqref{eq:Jt1} with \eqref{eq:Jt3}, we obtain:
\begin{multline}\label{eq:final}
V^\star(x_0)\le V_T(x_0,0)\le V^\star(x_0)+\psi(x^\star(T)) \Longrightarrow |V_T(x_0,0)-V^\star(x_0)|\le \psi(x^\star(T)) \\\Longrightarrow \max_{x_0\in\Omega}|V_T(x_0,0)-V^\star(x_0)|\le \max_{x_0\in\Omega}\psi(x^\star(T)).
\end{multline}
But $\mu^\star$ is asymptotically stabilizing on $\Omega$, hence $\lim_{T\rightarrow\infty}\max_{x_0\in\Omega}\psi(x^\star(T))=0$. Owing to \eqref{eq:final}, this implies $\lim_{T\rightarrow\infty}\max_{x_0\in\Omega}|V_T(x_0,0)-V^\star(x_0)|=0$, i.e., $V_T(\cdot,0)$ converges uniformly to $V^\star$ on $\Omega$.
\end{proof}

\subsection{Uniform Approximation with Arbitrary Terminal Cost}

To conclude with the main result of this section, recall that \eqref{eq:sandwich} implies that if $V_T(\cdot,0)$ converges uniformly to $V^\star$ as $T\rightarrow\infty$ for $\phi\equiv0$ and $\phi\equiv\psi+V^\star$, then the same result is generalizable for any arbitrary positive semi-definite terminal cost $\phi$. Hence, we get the following theorem.
\begin{theorem}\label{th:ucon}
    Let $\phi$ be an arbitrary, positive semi-definite terminal cost for \eqref{eq:costT}. Then, 
    $V_T(\cdot,0)\rightarrow V^\star$ and $\mu_T(\cdot,0)\rightarrow\mu^\star$ uniformly on $\Omega$.
\end{theorem}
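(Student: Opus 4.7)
The plan is to obtain the theorem in two stages: first establishing uniform convergence $V_T(\cdot,0) \to V^\star$ by a squeeze argument based on \eqref{eq:sandwich} and the two preceding lemmas, and then upgrading this to uniform convergence of the gradients so as to deduce $\mu_T(\cdot,0) \to \mu^\star$.

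For the value-function half, I would simply take $\psi \equiv \phi$ in the sandwich inequality \eqref{eq:sandwich}, which is legitimate since $\phi$ is positive semi-definite by hypothesis. This gives, for every $T > 0$ and every $x \in \Omega$,
$$V_T(x,0;0) \;\le\; V_T(x,0;\phi) \;\le\; V_T(x,0;\phi + V^\star).$$
Lemma \ref{le:ucon1} asserts that the left-hand side converges uniformly on $\Omega$ to $V^\star$, and Lemma \ref{le:ucon2} applied with its $\psi$ taken to be our $\phi$ (which is continuous and positive semi-definite) does the same for the right-hand side. Subtracting $V^\star$ throughout and taking $\sup_{x\in\Omega}$ squeezes $\|V_T(\cdot,0;\phi) - V^\star\|_{C^0(\Omega)}$ between two quantities that vanish as $T \to \infty$, which gives the desired uniform convergence of the value function.

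For the control-policy half, I would use the explicit formulas \eqref{eq:mustar} and \eqref{eq:muTstar}, which together imply
$$\mu_T(x,0) - \mu^\star(x) \;=\; -\tfrac{1}{2}\,R^{-1} g^\textrm{T}(x)\bigl(\nabla_x V_T(x,0) - \nabla_x V^\star(x)\bigr).$$
Because $g$ is continuous and $\Omega$ is compact, uniform convergence of $\mu_T(\cdot,0)$ to $\mu^\star$ on $\Omega$ reduces to uniform convergence of the spatial gradients $\nabla_x V_T(\cdot,0)$ to $\nabla_x V^\star$ on $\Omega$. To obtain this, I would (i) use the HJB equation \eqref{eq:HJBT} evaluated at $t=0$ together with the (already established) uniform boundedness of $V_T(\cdot,0)$ provided by the sandwich, in order to extract a uniform bound on $\nabla_x V_T(\cdot,0)$, (ii) use smoothness of $f,g,Q$ on $\Omega$ and the HJB identity to get equicontinuity of the gradient family, (iii) apply Arzelà-Ascoli on $\Omega$ to produce a uniformly convergent subsequence, and (iv) argue that any such limit must equal $\nabla_x V^\star$ because it is the distributional derivative of the uniform limit $V^\star$ of $V_T(\cdot,0)$, which is unique. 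Since every subsequential limit coincides with $\nabla_x V^\star$, the full family converges uniformly.

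The main obstacle I anticipate is step (i)-(ii): promoting uniform convergence of a sequence of functions to uniform convergence of their gradients is not automatic and requires harvesting the gradient bound and equicontinuity from the HJB structure itself rather than from the convergence $V_T \to V^\star$ alone. The value-function part, by contrast, is essentially a one-line consequence of \eqref{eq:sandwich} and Lemmas \ref{le:ucon1}--\ref{le:ucon2} once one notices that any positive semi-definite $\phi$ is itself an admissible choice for the parameter $\psi$ in the sandwich.
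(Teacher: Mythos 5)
Your value-function half is exactly the paper's argument: the paper disposes of the first claim in one line by taking $\psi\equiv\phi$ in \eqref{eq:sandwich} and invoking Lemmas \ref{le:ucon1} and \ref{le:ucon2}. That part is correct and complete.

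The control half is where you diverge from the paper, and where your argument has a genuine gap at the step you yourself flag. Your plan hinges on (i): extracting a $T$-uniform bound on $\nabla_x V_T(\cdot,0)$ from the HJB \eqref{eq:HJBT} at $t=0$ together with the uniform boundedness of $V_T(\cdot,0)$. But at $t=0$ that equation reads
\begin{equation*}
0=\nabla_t V_T(x,0)+\nabla_x V_T^{\textrm{T}}(x,0)f(x)+Q(x)-\frac{1}{4}\nabla_x V_T^{\textrm{T}}(x,0)g(x)R^{-1}g^{\textrm{T}}(x)\nabla_x V_T(x,0),
\end{equation*}
which contains the additional unknown $\nabla_t V_T(x,0)$; the sandwich controls only the values $V_T(x,0)$, not their time derivative, so you have one scalar identity in two uncontrolled quantities and no gradient bound falls out. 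Even granting control of $\nabla_t V_T$, the quadratic form $g(x)R^{-1}g^{\textrm{T}}(x)$ is in general only positive semidefinite (its rank is at most $m<n$), so the equation can at best constrain the component of $\nabla_x V_T$ along the range of $g(x)$ --- which happens to be the only component $\mu_T$ depends on, a fact worth exploiting, but not the full-gradient bound your Arzel\`a--Ascoli step needs. Step (ii), equicontinuity of the gradients, is in the same position: it requires uniform second-derivative estimates that neither smoothness of $f,g,Q$ nor uniform convergence of $V_T(\cdot,0)$ provides. Steps (iii)--(iv) are fine once (i)--(ii) are granted, since a uniform subsequential limit of the gradients of a uniformly convergent sequence must be the gradient of the limit; but the chain breaks at its first link. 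For comparison, the paper takes an entirely different (and itself rather terse) route for this half: it cites the arguments of \cite{abu2005nearly}, asserting that Lipschitz continuity of the dynamics and continuity of the $\mu_T$ give uniform convergence of the closed-loop trajectories, which together with $V_T(\cdot,0)\to V^\star$ yields $\mu_T(\cdot,0)\to\mu^\star$. So your instinct that the controller convergence is the genuinely hard half is right, but the regularity harvesting you propose in (i)--(ii) does not go through as described.
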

\begin{proof}
The fact that $V_T(\cdot,0)\rightarrow V^\star$ uniformly on $\Omega$ follows directly from \eqref{eq:sandwich} and Lemmas \ref{le:ucon1}-\ref{le:ucon2}. Moreover, following the same arguments as in \cite{abu2005nearly},  the Lipschitz continuity of \eqref{eq:sys} as well as the continuity of the controllers $\mu_T$ imply that, as $T\rightarrow\infty$, the trajectories of \eqref{eq:sys} under $u=\mu_T$ converge uniformly $\forall x_0\in\Omega$, hence $\mu_T(\cdot,0)\rightarrow\mu^\star$ uniformly on $\Omega$ since $V_T(\cdot,0)\rightarrow V^\star$ uniformly.
\end{proof}

 \begin{remark}
Combining Theorem \ref{th:ucon} with the universal approximation property of neural networks, we conclude that approximating the finite-horizon value function $V_T$ with PINNs for a large horizon $T$ is a valid procedure towards estimating the infinite horizon value function $V^\star$ over a compact set.
 \end{remark}

 Note that \eqref{eq:sandwich} implies $V_T(\cdot,0; 0)-V^\star(\cdot)\le V_T(\cdot,0; \psi)-V^\star(\cdot) \le V_T(\cdot,0; \psi+V^\star)-V^\star(\cdot)$. Combining this with \eqref{eq:ineq2} and \eqref{eq:final}, it follows that
 \begin{equation*}
\max_{x_0\in\Omega}|V_T(x_0,0)-V^\star(x_0)|\le \max\{ \max_{x_0\in\Omega}\psi(x^\star(T)), ~\max_{x_0\in\Omega} V^\star(x_T(T;~0)) \},
 \end{equation*}
 where $x_T(\cdot;~0)$ denotes the trajectory of $x_T$ when $\phi\equiv0$. 
 The expression above gives us some information regarding the quality of the approximation of $V^\star(\cdot)$ by $V_T(\cdot,0)$ with respect to $T$. Specifically, it tells us that the approximation quality depends on how quickly the optimal trajectories $x^\star$ and $x_T$ at time $T$ converge to zero as $T$ increases, which is natural as the convergence rates of  $x^\star$ and $x_T$ dictate how quickly the tail of the integral \eqref{eq:cost} converges to 0.

\section{Effect of the Horizon Length $T$}\label{sec:horizon}

\subsection{Evaluation of the Horizon Length}

So far, it has been proven that the finite-horizon value function $V_T(\cdot,0)$ indeed uniformly approximates the infinite-horizon one, namely $V^\star$, as $T\rightarrow\infty$. Therefore, using PINNs to get an estimate of $V_T(\cdot,0)$ is, in principle, a valid procedure in the task of obtaining an estimate of $V^\star$, provided that $T$ is large enough. This brings forward another question: how large should $T$ be chosen? Or, in other words, how can we evaluate whether the chosen horizon $T$ is large enough, so that $\mu_T(\cdot,0)$, $V_T(\cdot,0)$ are close to  $\mu^\star$, $V^\star$?

One way to verify whether $V_T(\cdot,0)$ is close enough to $V^\star$ in $\Omega$, is by checking whether it satisfies the infinite-horizon HJB \eqref{eq:HJB}, since $V^\star$ is in fact a solution to that equation. Towards this end, one can define the following flow residual:
\begin{equation}\label{eq:flow_res}
E_e(x; V_T(\cdot,0))=\nabla_x V_T^{\textrm{T}}(x,0)f(x)+Q(x)-\frac{1}{4}\nabla_x V_T^{\textrm{T}}(x,0)g(x)R^{-1}g^\textrm{T}(x)\nabla_x V_T(x,0).
\end{equation}
This is essentially an error indicating how far $V_T(\cdot,0)$ is from satisfying the infinite-horizon HJB \eqref{eq:HJB} at the point $x$; it is zero if and only if $V_T(\cdot,0)$ solves \eqref{eq:HJB} at this specific point. Accordingly, one can define the boundary residual:
\begin{equation*}
E_b(V_T(\cdot,0))=V_T(0,0),
\end{equation*}
which is zero if and only if $V_T(0,0)=0$, i.e. if and only if $V_T(\cdot,0)$ satisfies the boundary condition of the infinite-horizon HJB \eqref{eq:HJB}. Therefore, by aggregating the residuals into a single error term:
\begin{equation}\label{eq:IHJB_res}
E=\frac{1}{N_c}\sum_{i=1}^{N_c}E_e(x^i_c; V_T(\cdot,0))^2+E_b(V_T(\cdot,0))^2,
\end{equation}
where $x^i_c\in\Omega$, $i=1,\ldots,N_c$, are data points, a procedure to check whether $V_T(\cdot,0)$ provides a good approximation to $V^\star$ would be to check how close $E$ is to zero.
\begin{remark}
Since PINNs are used to approximate $V_T$, only an approximation $\hat{V}_T$ of $V_T$ is available. Therefore, in practice, one would only be able to compute the residuals $E=\frac{1}{N_c}\sum_{i=1}^{N_c}E_e(x^i_e; \hat{V}_T(\cdot,0))^2+E_b(\hat{V}_T(\cdot,0))^2$, and check whether those are close to zero or not. Since $\hat{V}_T$ would only approximate $V_T$, a nonzero residual may not necessarily imply that $T$ is not large enough, but it could mean that the underlying neural network architecture is not expressive enough to sufficiently approximate $V_T$.
\end{remark}

\subsection{Extension of the Horizon Length}\label{sub:ext}

In case that the horizon $T$ is deemed to be small, by means of the residual error $E$ being large, then one can increase this horizon to get a better approximation of $V^\star$. In that respect, one could recompute $V_{T'}$ from scratch for some larger horizon $T'>T$ by reemploying the PINNs method of Section \ref{sec:PINNs} over the larger domain $\Omega\times[0,T']\supset\Omega\times[0,T]$. However, this can be a tedious procedure, and in fact unnecessary; instead, the value function $V_T$ that has already been computed for a smaller horizon can be used as an aid to find $V_{T'}$, $T'>T$, without resolving the whole problem from scratch over $[0,T']$, but rather by solving another smaller problem only over $[0,T'-T]$. To see this, consider the finite-horizon cost functional, which is similar to \eqref{eq:costT} but augmented with a different terminal cost, dependent on $V_T$:
\begin{equation}\label{eq:costinc}
    J'(\mu; x_0)=\int_{0}^{T'-T}\Big(Q(x(\tau))+r\left(\mu(x(\tau),\tau)\right)\Big)\textrm{d}\tau+V_T(x(T'-T),0).
\end{equation}
The optimal value function $V':\mathbb{R}^n\times [0,T'-T]\rightarrow\mathbb{R}$ and control $\mu':\mathbb{R}^n\times [0,T'-T]\rightarrow\mathbb{R}^m$ of this problem satisfy:
\begin{align}\label{eq:HJBinc}
&\nabla_t V^{'}(x,t)+\nabla_x V^{'\textrm{T}}(x,t)f(x)+Q(x)-\frac{1}{4}\nabla_x V^{'\textrm{T}}(x,t)g(x)R^{-1}g^\textrm{T}(x)\nabla_x V'(x,t)=0,\\& V'(x,T'-T)=V_T(x,0),\nonumber
\end{align}
and
\begin{equation}\label{eq:muinc}
\mu'(x,t)=-\frac{1}{2}R^{-1}g^\textrm{T}(x)\nabla_x {V'}(x,t).
\end{equation}
Notice that these two functions are defined only over $t\in[0,T'-T]$ and not over $t\in[0,T']$, and the same holds for the corresponding PDE \eqref{eq:HJBinc}. In what follows, we show that, in fact, $V'(\cdot,0)=V_{T'}(\cdot,0)$ and $\mu'(\cdot,0)=\mu_{T'}(\cdot,0)$.

\begin{theorem}\label{th:inc}
It holds that $V'(\cdot,0)=V_{T'}(\cdot,0)$, and $\mu'(\cdot,0)=\mu_{T'}(\cdot,0)$.
\end{theorem}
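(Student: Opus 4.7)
The plan is to prove this by a dynamic programming / Bellman principle argument: the cost-to-go of the horizon-$T'$ problem over the last $T$ units of time is, by time invariance of both the dynamics \eqref{eq:sys} and the integrand of \eqref{eq:costT}, exactly $V_T(\cdot,0)$ evaluated at the state reached at time $T'-T$. Collapsing this tail into a terminal cost turns the horizon-$T'$ problem into the horizon-$(T'-T)$ problem with terminal cost $V_T(\cdot,0)$, which is by definition the problem solved by $V'$ and $\mu'$.

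Concretely, I would start from an arbitrary control $\mu$ on $[0,T']$ and split the horizon-$T'$ cost as
\begin{equation*}
J_{T'}(\mu;~x_0)=\int_{0}^{T'-T}\!\Big(Q(x)+r(\mu)\Big)\textrm{d}\tau+\int_{T'-T}^{T'}\!\Big(Q(x)+r(\mu)\Big)\textrm{d}\tau+\phi(x(T')).
\end{equation*}
For any fixed trajectory segment on $[0,T'-T]$ ending at a state $y:=x(T'-T)$, the tail piece is the cost of a horizon-$T$ problem starting from $y$, after substituting $s=\tau-(T'-T)$. Because $f,g,Q,R,\phi$ do not depend on time, the change of variable turns this tail into precisely $J_T(\tilde\mu;~y)$, where $\tilde\mu(\cdot,s):=\mu(\cdot,s+T'-T)$. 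Minimizing only over the restriction of $\mu$ to $[T'-T,T']$ with the initial segment held fixed therefore yields, by the very definition of $V_T$, the value $V_T(y,0)=V_T(x(T'-T),0)$. Taking the infimum in two stages (first over the tail, then over the head) gives
\begin{equation*}
V_{T'}(x_0,0)=\inf_{\mu}\left\{\int_{0}^{T'-T}\!\Big(Q(x)+r(\mu)\Big)\textrm{d}\tau+V_T(x(T'-T),0)\right\}=\inf_{\mu} J'(\mu;~x_0)=V'(x_0,0),
\end{equation*}
which is the first claim.

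For the second claim, I would note that the infimum on $[T'-T,T']$ in the decomposition above is attained by the time-shifted optimal policy $\mu_T(\cdot,\,\cdot-(T'-T)+0)$ of the horizon-$T$ problem, while the infimum on $[0,T'-T]$ is, by construction, attained by $\mu'$. Concatenating these two pieces produces an admissible control on $[0,T']$ that is optimal for $J_{T'}$; by uniqueness of the minimizer (guaranteed by the quadratic-in-$\mu$ integrand and the feedback formulae \eqref{eq:muTstar}, \eqref{eq:muinc}), this concatenation must coincide with $\mu_{T'}$ on $[0,T']$. Restricting to $t=0$ then gives $\mu'(\cdot,0)=\mu_{T'}(\cdot,0)$.

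The main technical point to be careful about is the time-invariance substitution in the tail integral, together with the observation that the supplied terminal condition in \eqref{eq:HJBinc}, namely $V'(x,T'-T)=V_T(x,0)$, is precisely what encodes this Bellman-type glueing; without matching the terminal data to the cost-to-go, the argument would fail. Everything else is a direct application of the principle of optimality to the cost \eqref{eq:costT}, analogous to the use of the principle of optimality that already appears in the proof of Lemma~\ref{le:ucon2}.
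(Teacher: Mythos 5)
Your proof is correct and follows essentially the same route as the paper's: both split the horizon-$T'$ cost at $T'-T$ and apply the principle of optimality in two stages to identify the tail with $V_T(x(T'-T),0)$, thereby reducing $J_{T'}$ to $J'$. The only difference is in the second claim, where you argue via concatenation of optimal policies and uniqueness of the minimizer, whereas the paper reads the equality $\mu'(\cdot,0)=\mu_{T'}(\cdot,0)$ directly off the feedback formulae \eqref{eq:muTstar} and \eqref{eq:muinc} from the already-established equality $V'(\cdot,0)=V_{T'}(\cdot,0)$; the paper's route is slightly more economical since it needs no uniqueness assumption, but both are sound.
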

\begin{proof}
The value function $V_{T'}(x_0,0)$, by definition, is equal to the optimal value of the following cost functional for any $x_0\in\mathbb{R}^n$:
\begin{align*}
    J_{T'}(\mu;~x_0)=&\int_{0}^{T'}\Big(Q(x(\tau))+r\left(\mu(x(\tau),\tau)\right)\Big)\textrm{d}\tau+\phi(x(T'))\\=&\int_{0}^{T'-T}\Big(Q(x(\tau))+r\left(\mu(x(\tau),\tau)\right)\Big)\textrm{d}\tau+\int_{T'-T}^{T'}\Big(Q(x(\tau))+r\left(\mu(x(\tau),\tau)\right)\Big)\textrm{d}\tau+\phi(x(T')).
\end{align*}
However, by the principle of optimality, we have:
\begin{align*}
V_{T'}(x_0,0)&=J_{T'}(\mu_{T'};~x_0)=\min_\mu J_{T'}(\mu;~x_0)\\&=\min_{\mu}\Big(\int_{0}^{T'-T}\Big(Q(x(\tau))+r\left(\mu(x(\tau),\tau)\right)\Big)\textrm{d}\tau+\int_{T'-T}^{T'}\Big(Q(x(\tau))+r\left(\mu(x(\tau),\tau)\right)\Big)\textrm{d}\tau+\phi(x(T'))\Big)
\\&=\min_{\mu}\Big(\int_{0}^{T'-T}\Big(Q(x(\tau))+r\left(\mu(x(\tau),\tau)\right)\Big)\textrm{d}\tau+\min_{\mu'}\Big(\int_{T'-T}^{T'}\Big(Q(x(\tau))+r\left(\mu'(x(\tau),\tau)\right)\Big)\textrm{d}\tau+\phi(x(T'))\Big)\Big)\\&=\min_{\mu}\Big(\int_{0}^{T'-T}\Big(Q(x(\tau))+r\left(\mu(x(\tau),\tau)\right)\Big)\textrm{d}\tau+V_T(x(T'-T),0)\Big)\\&=\min_\mu J'(\mu;~x_0)=V'(x_0,0),
\end{align*}
which is the required result. The fact that $\mu'(\cdot,0)=\mu_{T'}(\cdot,0)$ then follows from \eqref{eq:muTstar} and \eqref{eq:muinc} and the fact that $V_{T'}(x_0,0)=V'(x_0,0)$ for all $x_0\in\mathbb{R}^n$.
\end{proof}

Based on Theorem \ref{th:inc}, if one knows the value function $V_T(\cdot,0)$ for some horizon length $T>0$, then they can compute the value function $V_{T'}(\cdot,0)$ over a larger horizon $T'>T$ by solving the PDE \eqref{eq:HJBinc}. This PDE is defined only over a time length of $T'-T$, so it is significantly less tedious to deal with than resolving the PDE \eqref{eq:HJBT} from scratch. In that respect, if one needs to increase the horizon length $T$ so that $V_T(\cdot,0)$ better approximates $V^\star$, it would be less tedious to apply the PINN method on \eqref{eq:HJBinc}, instead applying it from scratch on \eqref{eq:HJBT} for a larger horizon.

\begin{remark}
The PDE \eqref{eq:HJBinc} can be approximately solved similarly to \eqref{eq:HJBT}, by sampling data points on the flow and the boundary condition and defining a loss function of the form \eqref{eq:MSE}. The specific details are straightforward and omitted to avoid repetition.
\end{remark}

With regard to the computational complexity of the horizon extension mechanism, there is a tradeoff between i) choosing a new horizon length $T'$ that is large enough and ii) using a neural network structure that is relatively small and thus faster to train. On the one hand, choosing $T'$ to be much larger than $T$ allows one to achieve a desired tolerance one-shot, without having to re-extend the horizon length multiple times. On the other hand, choosing $T'$ to be only slightly larger than $T$ means that the horizon-extension problem \eqref{eq:costinc} can be solved with a smaller neural network, because the time domain of the extension lies on $[T, T']$. Similarly, choosing an initial horizon length $T$ that is large enough can enable one-shot learning of the infinite-horizon optimal controller without resorting to any iterations. Nevertheless, a small horizon $T$ combined with iterative extensions can enable the use of physics-informed neural networks that are smaller in size and thus faster to train.

\subsection{Effect of Approximation Errors on Horizon Length Extension}

Typically, the finite-horizon value function $V_T(\cdot,0)$ cannot be computed exactly owing to the finite approximation capabilities of PINNs. This is not an issue if $T$ is chosen to be sufficiently large from the very beginning, however, problems may arise if $T$ is initially small and then extended using the method of Subsection \ref{sub:ext}. In this case, errors in the estimation of $V_T(\cdot,0)$ owed to the finite approximation capabilities of the PINNs may be carried over to the estimation of $V_{T'}(\cdot,0)$, and such errors can accumulate if the horizon is extended multiple times.

To study the effect of approximation errors on the extension of the horizon length $T$, suppose that instead of $V_T(\cdot,0)$, the estimated finite-horizon value function given by PINNs is $\hat{V}_T(\cdot,0)$, and denote as $e_T(x)=\hat{V}_T(x,0)-V_T(x,0)$ the approximation error for any $x\in\Omega$. In this case, if we apply the method of Subsection \ref{sub:ext} to extend the horizon length to $T'>T$, instead of $V_{T'}(\cdot,0)$ we will obtain a value function estimate $\hat{V}_{T'}(x,0)$ that is the minimum value of the cost:
\begin{equation}\label{eq:costinc2}
    \hat{J}'(\mu; x_0)=\int_{0}^{T'-T}\Big(Q(x(\tau))+r\left(\mu(x(\tau),\tau)\right)\Big)\textrm{d}\tau+\hat{V}_T(x(T'-T),0).
\end{equation}
Note that here, unlike \eqref{eq:costinc}, $\hat{V}_T$ is used as a terminal cost because the exact $V_T$ is unknown owing to approximation errors. Accordingly, instead of $\mu_{T'}(\cdot,0)$, we will obtain an optimal control estimate $\hat{\mu}_{T'}(x,0)$ that is the minimizer of \eqref{eq:costinc2} at $t=0$.

In the following theorem, we provide some intuition on how close $\hat{V}_{T'}(\cdot,0)$ will be to ${V}_{T'}(\cdot,0)$, based on the approximation error $e_T=\hat{V}_T(x,0)-V_T(x,0)$.
\begin{theorem}\label{th:error_bound}
Denote as $\hat{x}_{T'}$ the trajectories of \eqref{eq:sys} under $u=\hat{\mu}_{T'}$. Then, for all $x_0\in \Omega$:
\begin{equation}\label{eq:error_bound}
|\hat{V}_{T'}(x_0,0)-V_{T'}(x_0,0)|\le \max\{|e_T(x_{T'}(T'-T))|, |e_T(\hat{x}_{T'}(T'-T))|\}.
\end{equation}
\end{theorem}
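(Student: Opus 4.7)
The plan is to use a standard ``cross-evaluation'' argument: bound $\hat V_{T'}(x_0,0)-V_{T'}(x_0,0)$ above and below by feeding each problem's optimizer into the other problem's cost, and exploit the fact that the two costs \eqref{eq:costinc} and \eqref{eq:costinc2} differ only in their terminal term, which is precisely $e_T$ evaluated at the state at time $T'-T$.

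First, I would record the two identities coming from optimality. By Theorem \ref{th:inc}, $V_{T'}(x_0,0)=V'(x_0,0)$, and $\mu_{T'}$ is the minimizer of $J'$; similarly $\hat{\mu}_{T'}$ is the minimizer of $\hat{J}'$. Writing out these minima along their respective optimal trajectories gives
\begin{align*}
V_{T'}(x_0,0)&=\int_{0}^{T'-T}\!\!\!\big(Q(x_{T'})+r(\mu_{T'}(x_{T'},\tau))\big)\textrm{d}\tau+V_T(x_{T'}(T'-T),0),\\
\hat{V}_{T'}(x_0,0)&=\int_{0}^{T'-T}\!\!\!\big(Q(\hat{x}_{T'})+r(\hat{\mu}_{T'}(\hat{x}_{T'},\tau))\big)\textrm{d}\tau+\hat{V}_T(\hat{x}_{T'}(T'-T),0).
\end{align*}

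Next, I would derive the upper bound on $\hat V_{T'}-V_{T'}$ by using $\mu_{T'}$ as a suboptimal control in $\hat{J}'$:
\begin{equation*}
\hat{V}_{T'}(x_0,0)\le \hat{J}'(\mu_{T'};x_0)=V_{T'}(x_0,0)-V_T(x_{T'}(T'-T),0)+\hat{V}_T(x_{T'}(T'-T),0)=V_{T'}(x_0,0)+e_T(x_{T'}(T'-T)),
\end{equation*}
which yields $\hat{V}_{T'}(x_0,0)-V_{T'}(x_0,0)\le e_T(x_{T'}(T'-T))$. Symmetrically, using $\hat{\mu}_{T'}$ as a suboptimal control in $J'$ gives $V_{T'}(x_0,0)\le \hat{V}_{T'}(x_0,0)-e_T(\hat{x}_{T'}(T'-T))$, hence $\hat{V}_{T'}(x_0,0)-V_{T'}(x_0,0)\ge e_T(\hat{x}_{T'}(T'-T))$.

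Combining the two one-sided inequalities sandwiches $\hat{V}_{T'}-V_{T'}$ between $e_T(\hat{x}_{T'}(T'-T))$ and $e_T(x_{T'}(T'-T))$, from which \eqref{eq:error_bound} is immediate by taking absolute values and the maximum. The argument is essentially mechanical once the principle of optimality / Theorem \ref{th:inc} is invoked; the only subtlety worth stating carefully is that in the intermediate step the running integrals cancel exactly because the same trajectory (either $x_{T'}$ or $\hat{x}_{T'}$) is used to evaluate both $J'$ and $\hat{J}'$, so the entire gap between the two cost values reduces to the terminal-cost mismatch $e_T$. I do not foresee any substantial obstacle.
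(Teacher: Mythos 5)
Your proposal is correct and follows essentially the same route as the paper's proof: both sides of the bound are obtained by cross-evaluating each problem's optimal policy ($\mu_{T'}$ in $\hat{J}'$ and $\hat{\mu}_{T'}$ in $J'$), noting that the running costs cancel along the common trajectory so that only the terminal mismatch $e_T$ survives. No gaps.
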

\begin{proof}
We have:
\begin{equation}\label{eq:error1}
\begin{split}
\hat{V}_{T'}(x_0,0)&=\min_{\mu} \hat{J}'(\mu;~x_0)\le  \hat{J}'(\mu_{T'};~x_0)\\&=\int_{0}^{T'-T}\Big(Q(x_{T'}(\tau))+r\left(\mu_{T'}(x_{T'}(\tau),\tau)\right)\Big)\textrm{d}\tau+\hat{V}_T(x_{T'}(T'-T),0)\\&=\int_{0}^{T'-T}\Big(Q(x_{T'}(\tau))+r\left(\mu_{T'}(x_{T'}(\tau),\tau)\right)\Big)\textrm{d}\tau+{V}_T(x_{T'}(T'-T),0)+e_T(x_{T'}(T'-T))\\&={V}_{T'}(x_0,0)+e_T(x_{T'}(T'-T)).
\end{split}
\end{equation}
Moreover:
\begin{equation}\label{eq:error2}
\begin{split}
{V}_{T'}(x_0,0)&=\min_{\mu} {J}'(\mu;~x_0)\le  {J}'(\hat{\mu}_{T'};~x_0)\\&=\int_{0}^{T'-T}\Big(Q(\hat{x}_{T'}(\tau))+r\left(\hat{\mu}_{T'}(\hat{x}_{T'}(\tau),\tau)\right)\Big)\textrm{d}\tau+{V}_T(\hat{x}_{T'}(T'-T),0)\\&=\int_{0}^{T'-T}\Big(Q(\hat{x}_{T'}(\tau))+r\left(\hat{\mu}_{T'}(\hat{x}_{T'}(\tau),\tau)\right)\Big)\textrm{d}\tau+\hat{V}_T(\hat{x}_{T'}(T'-T),0)-e_T(\hat{x}_{T'}(T'-T))\\&=\hat{V}_{T'}(x_0,0)-e_T(\hat{x}_{T'}(T'-T)).
\end{split}
\end{equation}
Combining \eqref{eq:error1}-\eqref{eq:error2}, we obtain
\begin{equation*}
e_T(\hat{x}_{T'}(T'-T))\le \hat{V}_{T'}(x_0,0)-V_{T'}(x_0,0)\le e_T(x_{T'}(T'-T)).
\end{equation*}
This yields \eqref{eq:error_bound}.
\end{proof}
Theorem \ref{th:error_bound} provides some intuition on how approximation errors in the estimation of the value function $V_{T}$ affect the horizon-extending procedure of Subsection \ref{sub:ext}. In particular, if the approximation error $e_T$ is bounded by a constant $\epsilon>0$, then Theorem \ref{th:error_bound} implies the following error bound for the extended-horizon value function estimate $\hat{V}_{T'}$:
\begin{equation}\label{eq:loose_bound}
|\hat{V}_{T'}(x_0,0)-V_{T'}(x_0,0)|\le \epsilon.
\end{equation}

\begin{figure}[!t]
	\begin{center}
		\includegraphics[width=14cm, height=5.3cm]{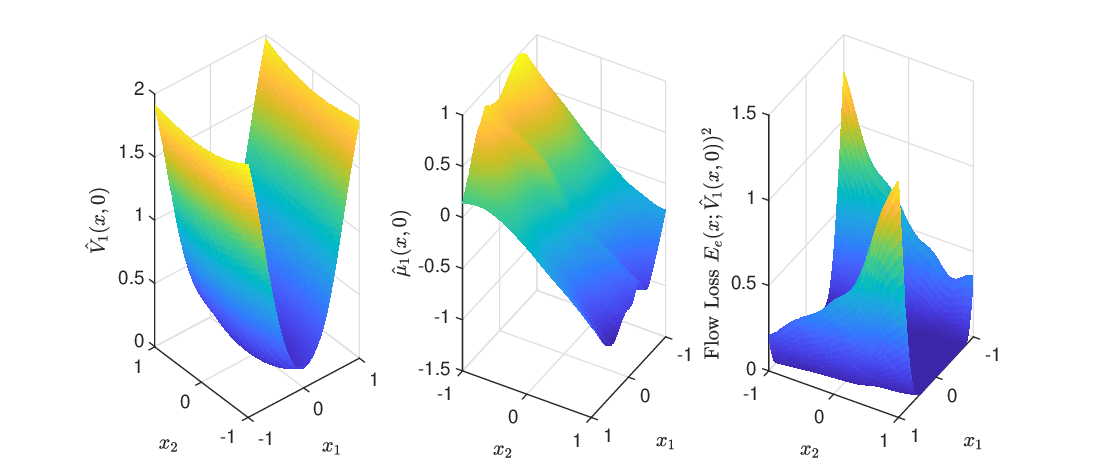}
		\caption{The learnt value function $\hat{V}_{1}(\cdot,0)$ (left), the learnt control policy $\hat{\mu}_{1}(\cdot,0)$ (middle), and the squared flow residual $E_e(x;\hat{V}_{1}(x,0))^2$ (right), for $T=1$ for the pendulum.}	
		\label{fig:pend1}
	\end{center}
\end{figure}

\begin{figure}[!t]
	\begin{center}
		\includegraphics[width=14cm, height=5.3cm]{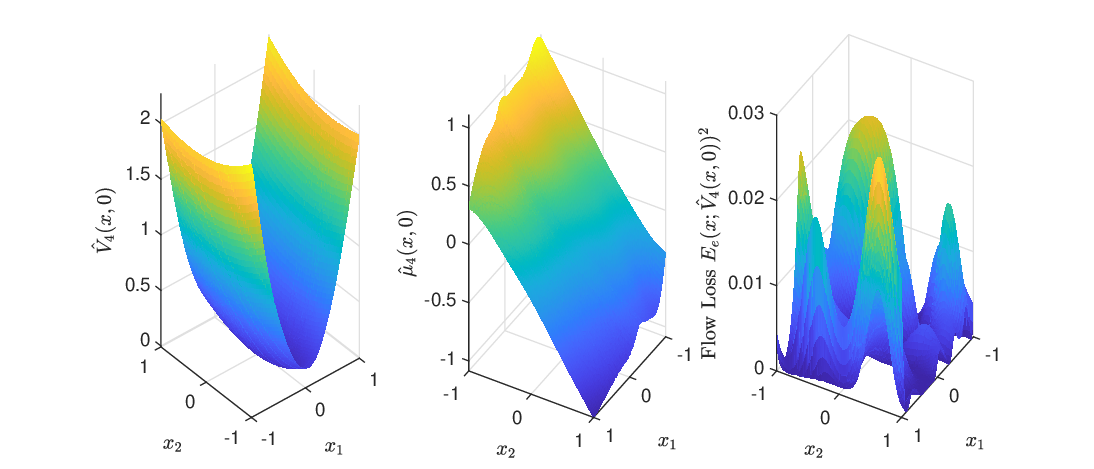}
		\caption{The learnt value function $\hat{V}_{4}(\cdot,0)$ (left), the learnt control policy $\hat{\mu}_{4}(\cdot,0)$ (middle), and the squared flow residual $E_e(x;\hat{V}_{4}(x,0))^2$ (right), for the pendulum.}	
		\label{fig:pend4}
	\end{center}
\end{figure}
\begin{figure}[!t]
	\begin{center}
		\includegraphics[width=14cm, height=5.3cm]{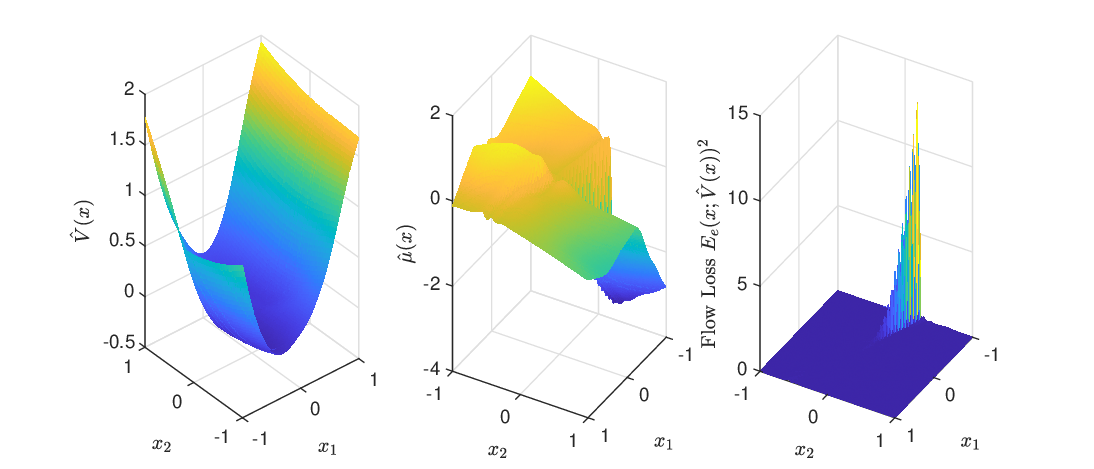}
		\caption{The learnt value function $\hat{V}(\cdot)$ (left), the learnt control policy $\hat{\mu}(\cdot)$ (middle), and the squared flow residual $E_e(x;\hat{V}(x))^2$ (right), for the pendulum. }	
		\label{fig:pendinf}
	\end{center}
\end{figure}

Yet, the result of the Theorem is much stricter than \eqref{eq:loose_bound}. In particular, note that by Lemma \ref{le:xt}, one has $\lim_{T'\rightarrow\infty}x_{T'}(T'-T)=0$, therefore the continuity of the approximation error implies $\lim_{T'\rightarrow\infty}e_T(x_{T'}(T'-T))=e_T(0)$. Hence, if the basis functions of the PINN are chosen so that $\hat{V}_T(0,0)=0$, or if this condition is enforced through the residual $F_{in}$, then $e_T(0)=\hat{V}_T(0,0)-{V}_T(0,0)=0$, and thus $\lim_{T'\rightarrow\infty}e_T(x_{T'}(T'-T))=e_T(0)=0$.
Accordingly, if the extended horizon length $T'$ is large enough, then this implies that $|e_T(x_{T'}(T'-T))|$ will be small. Moreover, as the approximation capabilities of the neural network increase, $\hat{x}_{T'}$ converges uniformly to $x_{T'}$, thus $e_T(\hat{x}_{T'}(T'-T))$ also becomes smaller as the extended horizon length $T'$ increases. Combining these two observations and looking at equation \eqref{eq:error_bound}, we conclude a much stronger result than \eqref{eq:loose_bound}: that a large enough horizon extension rules out the phenomenon of ``accumulating" approximation errors when the method of Subsection \ref{sub:ext} is applied multiple times.

\section{Simulations}\label{sec:sim}

\subsection{Torsonial Pendulum}

We first perform simulations on a torsonial pendulum \cite{liu2013policy}, with dynamics: 
\begin{align*}
\dot{\theta}&=\omega,\\
J\dot{\omega}&=u-Mgl\textrm{sin}(\theta)-f_d\omega,
\end{align*}
where $M = 1/3$ kg and $l = 2/3$ m are the mass and length
of the pendulum bar, respectively. In addition, $J = 4/3Ml^2$
and $f_d=0.2$ are the rotary inertia and frictional factor, respectively, and $g=9.8~\textrm{m/s}^2$. Choosing $x_1=\theta$ and $x_2=\omega$, the system can be brought to the form \eqref{eq:sys} with $f(x)=\begin{bmatrix}x_2 & (-Mgl\textrm{sin}(x_1)-f_dx_2)/J \end{bmatrix}^\textrm{T}$ and $g(x)=1/J$. For this system, we want to solve the infinite-horizon optimal control problem under cost \eqref{eq:cost} with $R=1$ and $Q(x)=x_1^2+x_2^2$.

To compute the optimal value function $V^\star$ for this system over the compact set $\Omega=[-1,1]^2$\footnote{In practice, the compact set $\Omega$ should be chosen so that it captures the regions of the state space that our system usually operates over. }, we utilize a 3-layer neural network with $100$ nodes in each layer, and with the activation function chosen to be the hyperbolic tangent, and apply the PINN method of Section \ref{sec:PINNs}. We train this network by defining the loss function as the residual \eqref{eq:MSE} of the finite-horizon HJB with horizon length $T=1$ and $\phi\equiv0$, with $N_e=10000$ flow data points, and $N_b=N_i=1600$ boundary data points. This loss function is then minimized using Adam \cite{kingma2014adam}, with an initial learning rate equal to $0.005$, decaying at a rate of $0.002$ each time a mini-batch of $500$ flow data points is used. The procedure terminates once the loss function becomes less than $10^{-3}$ for each mini-batch. To account for ill-conditioning at the corners of the state space, the set $\Omega$ during training is chosen to be equal to $[-1.5,1.5]\supset[-1,1]$.

The results are shown in Figure \ref{fig:pend1}, which depicts the estimated value function, optimal control, as well as the plot of the squared flow residual $E_e$ defined in \eqref{eq:flow_res}. We can see that while the estimated value function and optimal control are reasonable in terms of directions, the flow residual is still large, with a mean squared value of $0.045$. This implies that the horizon chosen was not large enough, so we extend it to $4$ seconds by using the method of Subsection \ref{sub:ext}, in three increments of $1$ second. The results are shown in Figure \ref{fig:pend4}, where clearly the residual is now much smaller, with a mean squared value of $0.0022$; more than ten times smaller than previously. Hence, the estimated value function and optimal control are now much more accurate.

\begin{table}[!t]\normalsize
  \begin{center}
  \caption{Mean-squared errors for different horizon lengths.}\label{tab:1}
   \begin{tabular}{c||c|c|c|c|c}
      \toprule 
      \textbf{Horizon Length $T$} & \textbf{$1$} & \textbf{$2$} & \textbf{$3$} & \textbf{$4$} & \textbf{$\infty$}\\
      \midrule 
      MSE \eqref{eq:IHJB_res} & $0.0453$ & $0.0041$ & $0.0026$ & $0.0022$ & $0.0122$\\
      \bottomrule 
    \end{tabular}
  \end{center}
\end{table}

We additionally perform a simulation in which we solve the infinite-horizon optimal control problem for the pendulum by directly applying PINNs to the steady-state HJB \eqref{eq:HJB}, as in \cite{furfaro2022physics}. All parameters are chosen as previously, and the results are shown in Figure \ref{fig:pendinf}. From this figure, we observe that the obtained approximation of the optimal value function and control is quite poor and, in fact, discontinuous, a property that could be attributed to the multiple solutions satisfying \eqref{eq:HJB}.  A quantitative comparison of the mean-squared error by directly solving the steady-state HJB \eqref{eq:HJB} as in \cite{furfaro2022physics}, denoted with $T=\infty$, and by using our approach for different horizon lengths, denoted with $T=1, 2, 3, 4$, is also shown in Table \ref{tab:1}. Clearly, our approach brought the mean-squared error below $10^{-2}$ for horizon lengths greater than $1$, whereas directly solving the steady-state HJB was unable to do so. These observations validate the motivation of this paper, which is that directly applying PINNs to the steady-state HJB can lead to issues of convergence.

\begin{figure}[!t]
	\begin{center}
		\includegraphics[width=10cm, height=6cm]{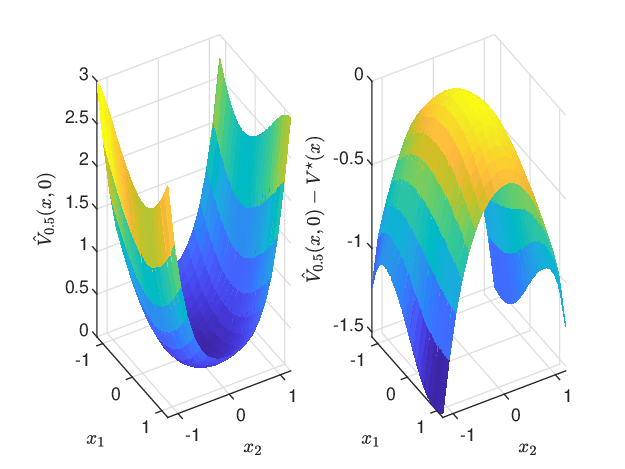}
		\caption{The learnt value function $\hat{V}_{0.5}(\cdot,0)$ (left) and the error from optimality $\hat{V}_{0.5}(\cdot,0)-V^\star(\cdot)$ (right), for $T=0.5$.}	
		\label{fig:V2}
	\end{center}
\end{figure}

\begin{figure}[!t]
	\begin{center}
		\includegraphics[width=10cm, height=6cm]{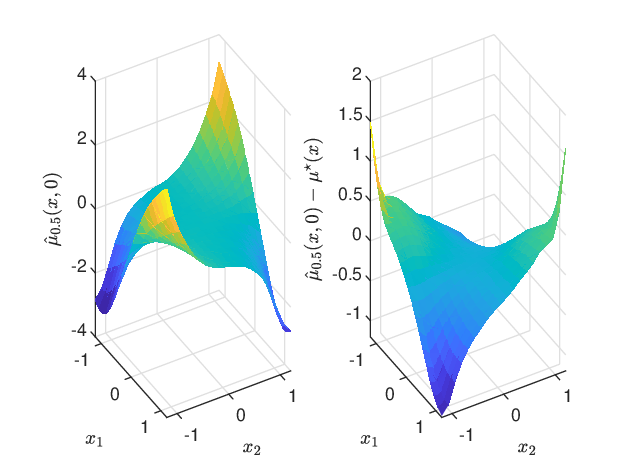}
		\caption{ The learnt control policy $\hat{\mu}_{0.5}(\cdot,0)$ (left) and the error from optimality $\hat{\mu}_{0.5}(\cdot,0)-\mu^\star(\cdot)$ (right), for $T=0.5$.}	
		\label{fig:u2}
	\end{center}
\end{figure}

\subsection{System with Quartic Value Function}

We now perform simulations on a more complex nonlinear system with a known quartic value function, borrowed from \cite{vrabie2009neural}:
\begin{align*}
\dot{x}_1&=-x_1+x_2+2x_2^3,\\
\dot{x}_2&=-\frac{1}{2}(x_1+x_2)+\frac{1}{2}x_2(1+2x_2^2)\textrm{sin}^2(x_1)+\textrm{sin}(x_1)u.
\end{align*}
This system can be written in form \eqref{eq:sys} if one selects $f(x)=\begin{bmatrix}-x_1+x_2+2x_2^3 & &  -\frac{1}{2}(x_1+x_2)+\frac{1}{2}x_2(1+2x_2^2)\textrm{sin}^2(x_1)\end{bmatrix}^\textrm{T}$ and $g(x)=\begin{bmatrix}0 & \textrm{sin}(x_1)\end{bmatrix}^\textrm{T}$. In addition, for this system, it is known that if $R=1$ and $Q(x)=x_1^2+x_2^2+x_2^4$, then the optimal value function entails a quartic term and is given by $V^\star(x)=\frac{1}{2}x_1^2+x_2^2+x_2^4$. Accordingly, from formula \eqref{eq:mustar}, the optimal control policy is $\mu^\star(x)=-\textrm{sin}(x_1)(x_2+2x_2^3)$.

\begin{figure}[!t] 
		\centering
                \includegraphics[width=10cm, height=6cm]{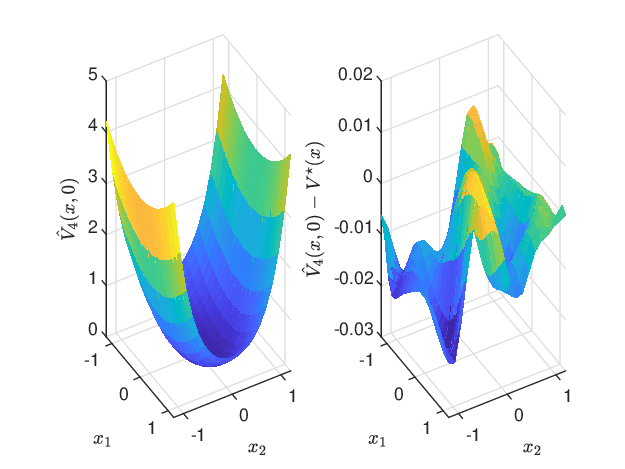}
                \caption{The learnt value function $\hat{V}_{4}(\cdot,0)$ (left) and the error from optimality $\hat{V}_{4}(\cdot,0)-V^\star(\cdot)$ (right).}
                \label{fig:V10}
\end{figure}

\begin{figure}[!t] 
		\centering
                \includegraphics[width=10cm, height=6cm]{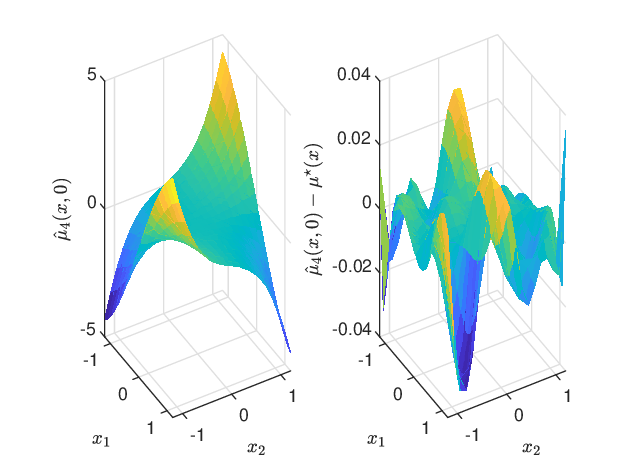}
                \caption{The learnt control policy $\hat{\mu}_{4}(\cdot,0)$ (left) and the error from optimality $\hat{\mu}_{4}(\cdot,0)-\mu^\star(\cdot)$ (right).}
                \label{fig:u10}
\end{figure}

To compute the optimal value function $V^\star$ for this system over the compact set $\Omega=[-1.2,1.2]^2$, we utilize a 3-layer neural network with $250$ nodes in each layer, and with the activation function chosen to be the hyperbolic tangent, and apply the PINN method of Section \ref{sec:PINNs}. We train this network by defining the loss function as the residual \eqref{eq:MSE} of the finite-horizon HJB with horizon length $T=0.5$ and $\phi\equiv0$, with $N_e=10000$ flow data points, and $N_b=N_i=1600$ boundary data points. This loss function is then minimized using Adam \cite{kingma2014adam}, with an initial learning rate equal to $0.01$, decaying at a rate of $0.003$ each time a mini-batch of $500$ flow data points is used. The procedure terminates once the loss function becomes less than $10^{-4}$ for each mini-batch. To account for ill-conditioning at the corners of the state space, the set $\Omega$ during training is chosen to be equal to $[-1.5,1.5]\supset[-1.2,1.2]$.

The results are depicted in Figures \ref{fig:V2}-\ref{fig:u2}. These figures show that while a modest approximation of $V^\star$ and $\mu^\star$ has taken place, the approximation error is quite large. This can also be verified by calculating the value of the steady-state HJB residual \eqref{eq:IHJB_res}, which is equal to $0.7705$; much larger than the desired threshold $10^{-4}$. To improve the accuracy of the approximation, the horizon length is extended to $4$ seconds using the method of Subsection \ref{sub:ext}, in seven increments of $0.5$ seconds. The resulting approximation of the value function $V^\star$ and optimal control $\mu^\star$ is shown in Figures \ref{fig:V10}-\ref{fig:u10},  and is clearly much more accurate than before the horizon length extension. This is also verifiable from the residual \eqref{eq:IHJB_res}, the value of which is $1.536\cdot10^{-4}$.

\subsection{Third-order System}

We consider the following system, borrowed from \cite{third}:
\begin{align*}
\dot{x}_1&=-2x_1+x_2^3,\\
\dot{x}_2&=-Cx_2+x_3,\\
\dot{x}_3&=-K(ax_2^3+bx_2+x_1^3)+Ku,
\end{align*}
with $K=0.05$, $C=3.5$, $a=5.5$, $b=4$.  The aim is to find the optimal control that minimizes \eqref{eq:cost} with $Q(x)=x_1^2+x_2^2+x_3^2$. To do this, we compute the optimal value function $V^\star$ over the compact set $\Omega=[-0.2, 0.2]^3$  by utilizing a 3-layer neural network with $500$ nodes in each layer,  and applying the PINN method with horizon length $T=0.5$, $\phi\equiv0$, $N_e=20000$ flow data points, and $N_b=N_i=27000$ boundary data points. We terminated the training procedure once the loss function became less than $10^{-4}$. 

We look at the value of the infinite-horizon residual \eqref{eq:IHJB_res} as a means to evaluate whether the approximated finite-horizon value function is close enough to the infinite-horizon one. To that end, we calculate the value of this residual for five different choices of $R$, namely $R=100, 10, 1, 0.1, 0.01$, and show the results in Table \ref{tab:2}. We particularly notice an interesting pattern: for small values of $R$, the infinite-horizon residual is very small, in particular, below $10^{-3}$ for $R=0.01$. This is because $R=0.01$ allows the control input to be very authoritative and makes the closed-loop reach the steady state very fast, thus rendering the finite-horizon value function a very good estimate of the infinite-horizon one (see also discussion after Remark 3). As $R$ increases, this property gradually wears off and leads to larger errors \eqref{eq:IHJB_res}.

\begin{table}[!h]\normalsize
  \begin{center}
  \caption{Mean-squared errors for different values of $R$.}\label{tab:2}
   \begin{tabular}{c||c|c|c|c|c}
      \toprule 
      \textbf{Parameter $R$} & \textbf{$0.01$} & \textbf{$0.1$} & \textbf{$1$} & \textbf{$10$} & \textbf{$100$}\\
      \midrule 
      MSE \eqref{eq:IHJB_res} & $0.0008$ & $0.1152$ & $0.0430$ & $0.0031$ & $0.0027$\\
      \bottomrule 
    \end{tabular}
  \end{center}
\end{table}

\section{Conclusion}\label{sec:conc}

We proposed a physics-informed machine learning framework to solve the infinite-horizon optimal control problem of nonlinear systems. Since the steady-state HJB equation has many solutions, we applied PINNs on a finite-horizon HJB that has a unique solution, and which provably uniformly approximates the infinite-horizon value function when the horizon length is large. A method to evaluate whether the horizon length is indeed large was also provided, as well as to extend it (in case it is not) with reduced computations. The robustness of the horizon-extending procedure to approximation errors was also studied, and simulations were performed to verify and clarify the theoretical results.

Future work includes extending this work to solve differential games defined over nonlinear dynamics.

\bibliography{references}

\end{document}